\documentclass[a4paper,UKenglish,cleveref, autoref, thm-restate]{lipics-v2021}
\usepackage[T1]{fontenc}
\usepackage{xspace}
\usepackage{hyperref}
\usepackage{makecell}
\usepackage{todonotes}
 \newcommand{\OP}{Orienteering Problem\xspace}
\newcommand{\tw}{tree-width\xspace}
\newcommand{\T}{\mathcal{T}}
\newcommand{\OPT}{\text{OPT}}

\newcommand{\eps}{\varepsilon}
\renewcommand{\P}{\mathcal{P}}

\newcommand{\iPi}{\mathit{\Pi}}

\newcommand{\mart}[1]{{#1}}

\usepackage{float}

% \newtheorem{lemmastar}[lemma]{Lemma$^\ast$}
% \crefname{lemmastar}{Lemma}{Lemmas}
% \Crefname{lemmastar}{Lemma}{Lemmas}
% \newtheorem{theoremstar}[theorem]{Theorem$^\ast$}
% \crefname{theoremstar}{Theorem}{Theorems}
% \Crefname{theoremstar}{Theorem}{Theorems}
% \newtheorem{observationstar}[observation]{Observation$^\ast$}
% \crefname{observationstar}{Observation}{Observations}
% \Crefname{observationstar}{Observation}{Observations}
% \newtheorem{propositionstar}[proposition]{Proposition$^\ast$}
% \crefname{propositionstar}{Proposition}{Propositions}
% \Crefname{propositionstar}{Proposition}{Propositions}
% \newtheorem{corollarystar}[corollary]{Corollary$^\ast$}
% \crefname{corollarystar}{Corollary}{Corollaries}
% \Crefname{corollarystar}{Corollary}{Corollaries}
% \newtheorem{remarkstar}[remark]{Remark$^\ast$}
% \crefname{remarkstar}{Remark}{Remarks}
% \Crefname{remarkstar}{Remark}{Remarks}

\ccsdesc[500]{Theory of computation~Routing and network design problems}

\keywords{Orienteering, Time windows, Restricted graph classes}

\nolinenumbers
\hideLIPIcs

\usepackage{orcidlink}
%This is a template for producing LIPIcs articles. 
%See lipics-v2021-authors-guidelines.pdf for further information.
%See lipics-v2021-authors-guidelines.pdf for further information.
%for A4 paper format use option "a4paper", for US-letter use option "letterpaper"
%for british hyphenation rules use option "UKenglish", for american hyphenation rules use option "USenglish"
%for section-numbered lemmas etc., use "numberwithinsect"
%for enabling cleveref support, use "cleveref"
%for enabling autoref support, use "autoref"
%for anonymousing the authors (e.g. for double-blind review), add "anonymous"
%for enabling thm-restate support, use "thm-restate"
%for enabling a two-column layout for the author/affilation part (only applicable for > 6 authors), use "authorcolumns"
%for producing a PDF according the PDF/A standard, add "pdfa"

%\pdfoutput=1 %uncomment to ensure pdflatex processing (mandatatory e.g. to submit to arXiv)
%\hideLIPIcs  %uncomment to remove references to LIPIcs series (logo, DOI, ...), e.g. when preparing a pre-final version to be uploaded to arXiv or another public repository

%\graphicspath{{./graphics/}}%helpful if your graphic files are in another directory

\bibliographystyle{plainurl}% the mandatory bibstyle

\title{Orienteering (with Time Windows) on Restricted Graph Classes} %TODO Please add%

% \author{Kevin Buchin\orcidlink{0000-0002-3022-7877} \and
% Mart Hagedoorn\orcidlink{0000-0002-8591-3380} \and
% Guangping Li\orcidlink{0000-0002-7966-076X} \and
% Carolin Rehs\orcidlink{0000-0002-8788-1028} }

% \institute{Technische Universität Dortmund, Dortmund, Germany
% \email{\{kevin.buchin,mart.hagedoorn,guangping.li,carlin.rehs\}@tu-dortmund.de}}
%
%\titlerunning{Dummy short title} %TODO optional, please use if title is longer than one line
\author{Kevin Buchin}{TU Dortmund, Germany}{kevin.buchin@tu-dortmund.de}{https://orcid.org/0000-0002-3022-7877}{}
\author{Mart Hagedoorn}{TU Dortmund, Germany}{mart.hagedoorn@tu-dortmund.de}{https://orcid.org/0000-0002-8591-3380}{}
\author{Guangping Li}{TU Dortmund, Germany}{guangping.li@tu-dortmund.de}{https://orcid.org/0000-0002-7966-076X}{}
\author{Carolin Rehs}{TU Dortmund, Germany}{carolin.rehs@tu-dortmund.de}{https://orcid.org/0000-0002-8788-1028}{}

\authorrunning{K.~Buchin, M.~Hagedoorn, G.~Li, C.~Rehs} %TODO mandatory. First: Use abbreviated first/middle names. Second (only in severe cases): Use first author plus 'et al.'

%Suggestions: Orienteering, Travelling Salesperson, time windows, tree-width

% \relatedversion{} %optional, e.g. full version hosted on arXiv, HAL, or other respository/website
%\relatedversiondetails[linktext={opt. text shown instead of the URL}, cite=DBLP:books/mk/GrayR93]{Classification (e.g. Full Version, Extended Version, Previous Version}{URL to related version} %linktext and cite are optional

%\supplement{}%optional, e.g. related research data, source code, ... hosted on a repository like zenodo, figshare, GitHub, ...
%\supplementdetails[linktext={opt. text shown instead of the URL}, cite=DBLP:books/mk/GrayR93, subcategory={Description, Subcategory}, swhid={Software Heritage Identifier}]{General Classification (e.g. Software, Dataset, Model, ...)}{URL to related version} %linktext, cite, and subcategory are optional

% \funding{(Optional) general funding statement \dots}%optional, to capture a funding statement, which applies to all authors. Please enter author specific funding statements as fifth argument of the \author macro.

%\nolinenumbers %uncomment to disable line numbering

%Editor-only macros:: begin (do not touch as author)%%%%%%%%%%%%%%%%%%%%%%%%%%%%%%%%%%
%%%%%%%%%%%%%%%%%%%%%%%%%%%%%%%%%%%%%%%%%%%%%%%%%%%%%%

\begin{document}
\maketitle

%TODO mandatory: add short abstract of the document
\begin{abstract}

Given a graph with edge costs and vertex profits and given a budget $B$, the Orienteering Problem asks for a walk of cost at most $B$ of maximum profit. Additionally, each profit may be given with a time window within it can be collected by the walk.
%
%The Orienteering Problem is a well-researched routing problem: A person wants to travel at most some distance $B$ while maximising the total profit collected by visiting sites. In the version with time windows, the profit of a site can only be collected within a certain time window.
%As the Orienteering Problem and thus the version with time windows are NP-hard in general, heuristics and approximation algorithms have been widely studied. However, the Orienteering Problem remains open on numerous special graph classes. 
While the Orienteering Problem and thus the version with time windows are NP-hard in general, it remains open on numerous special graph classes. Since in several applications, especially for planning a route from A to B with waypoints, the input graph can be restricted to tree-like or path-like structures, in this paper we consider orienteering on these graph classes. 

While the Orienteering Problem with time windows is NP-hard even on undirected paths and cycles, and remains so even if all profits must be collected, we show that for directed paths it can be solved in $\mathcal{O}(m \log m)$ time (where $m$ is the total number of time windows), even if each profit can be collected in one of several time windows. The same case is shown to be NP-hard for directed cycles. 

Particularly interesting is the Orienteering Problem on a directed cycle with one time window per profit.
We give an efficient algorithm for the case where all time windows are shorter than the length of the cycle, resulting in a 2-approximation for the general setting. 
%overall resulting in a 2-approximation. 
Based on the algorithm for directed paths, we further develop a polynomial-time approximation scheme for this problem. For the case where all profits must be collected, we present an $\mathcal{O}(n^4)$-time algorithm. 

For the Orienteering Problem with time windows for the edges, we give a quadratic time algorithm for undirected paths and observe that the problem is NP-hard for trees.

In the variant without time windows, we show that on trees and thus on graphs with bounded \tw the Orienteering Problem remains NP-hard. We present, however, an FPT algorithm to solve orienteering with unit profits that we then use to obtain a ($1+\eps$)-approximation algorithm on graphs with arbitrary profits and bounded \tw, which improves current results on general graphs.

%While we show that on trees and thus on graphs with bounded \tw the Orienteering Problem remains NP-hard, we present an FPT algorithm to solve Orienteering with unit profits. We further extend this result to a ($1+\eps$)-approximation algorithm on graphs with arbitrary profits and bounded \tw, which improves current results on general graphs.

%The \OP with time windows is NP-hard even on undirected paths and cycles, and remains so if all profits need to be collected. In contrast, for directed paths we show that the \OP can be solved in $\mathcal{O}(m \log m)$ time (where $m$ is the total number of time windows), even if each profit can be collected in one of several time windows. The same case is NP-hard for directed cycles. 

%Particularly interesting is the \OP on a directed cycle with one time window per profit.
%We give an efficient algorithm for the case that all time windows are shorter than the length of the cycle, overall resulting in a 2-approximation of the problem. For the case where all profits must to be collected, we present an $\mathcal{O}(n^4)$-time algorithm. 

%Finally, we also consider the \OP with time windows for the edges. For this case, we give a quadratic-time algorithm for undirected paths and observe that the problem is NP-hard for trees.

\end{abstract}

\section{Introduction}
The \emph{Orienteering Problem%(OP)
} has a wide range of applications. The problem originates from the sport of orienteering: Competitors navigate between control points marked on a map, aiming to visit as many points as possible within a given time limit. With the addition of different profits for each control point, this is a frequently applied routing problem, for instance in logistics~\cite{golden, logistics/tsiligirides}, tourism~\cite{AAI/tourist/Souffriau, vansteenwegen2007mobile}, and journey planning~\cite{tour4me, lu2015arc}.
In such applications, the class of graphs to be considered can often be restricted. For instance, when planning a tour from A to B, one can assume that the tour stays close or even largely makes use of the main route between A and B, while taking small detours to visit interesting sites close to this main route. Thus, we can limit the problem to a path-like or tree-like subgraph of the travel network. This scenario is modelled well by graphs with bounded tree-width.

Formally, in the \OP (OP) we are given a graph $G$ with $n = |V(G)|$, a starting point $s \in V(G)$, a budget $B \in \mathbb{R}_{> 0}$, a cost function $c : E(G) \rightarrow \mathbb{R}_{\geq 0}$, and a profit function $\pi : V(G) \rightarrow \mathbb{R}_{\geq 0}$. 
The goal is to find a walk $P = (e_1, \cdots, e_\ell)$, starting at $s$ such that $\sum_{e \in P} c(e) \leq B$, which maximises the total profit, where the profit of each vertex is collected only once, i.e.~$\sum_{v \in \bigcup{\{v\mart{, w} | (v, w) \in P\}}} \pi(v)$. 

The \OP has been a popular topic of research in the last decades (see~\cite{vansteenwegen2011} for a survey). Considerable effort has been devoted to designing exact (exponential-time) algorithms~\cite{feillet2005traveling, fischetti1998solving, LAPORTE1990193, ramesh1992optimal} and practical heuristics~\cite{archetti2014chapter, STAVROPOULOU2019340}. Since --as a generalization of the travelling salesperson problem-- the \OP is NP-hard~\cite{golden}, theoretical work on the \OP has mostly focused on approximation algorithms.
The current best approximation algorithm is a ($2 + \eps$)-approximation given by Checkuri et al.~\cite{chekuri2012improved}.
However, the best lower bound on the approximation ratio is at most 1481/1480 as given by Blum et al.~\cite{blum2007approximation}. %meaning the bounds are not tight.
Furthermore, when the sites lie in a fixed-dimensional Euclidean space and the underlying graph is the complete Euclidean graph, a $(1 + \eps)$-approximation algorithm has been proposed by Chen et al.~\cite{cheneuclidean}. 

\medskip

A natural extension of the \OP are versions with \emph{time windows}, i.e., intervals in time.
In applications, sites often need to be reached in specific time windows due to the availability of certain services, hours of operation, or deadlines. 

In the \OP with time windows, additionally each profit is associated with one or more time windows, and it can only be collected if the walk visits the corresponding vertex during one of these windows.
We distinguish between the version, in which every vertex has one single time window (OP-1TW) and the version with multiple time windows per vertex (OP-MTW).

%In this case, even on a path it is 

A restricted version of the \OP with time windows, which is closely related to the Travelling Salesperson Problem with time windows, asks if there is a walk $P$ which collects the profit of every vertex. We denote these problems as the Covering Orienteering Problem with single/multiple time windows (COP-1TW / COP-MTW) and show several hardness results even for these restricted problems.  

Problems closely related to the \OP with time windows have been studied before. 
Most frequently
 considered is the Travelling Salesperson Problem with time windows~\cite{DBLP:journals/mpc/AbeledoFPU13, DBLP:conf/stoc/BansalBCM04, DBLP:journals/disopt/BigrasGS08}. 
Since NP-hardness for Travelling Salesperson Problem implies NP-hardness for these versions with time windows, those problems are most studied in terms of heuristics (see~\cite{DBLP:journals/eor/GunawanLV16} for a survey), approximation approaches~\cite{DBLP:conf/stoc/BansalBCM04} or for restricted graph classes~\cite{DBLP:journals/networks/Tsitsiklis92}.  
Most notably, the work done by Tsitsiklis~\cite{DBLP:journals/networks/Tsitsiklis92} is strongly related to the COP-1TW, since, in some variations discussed by Tsitsiklis, a walk is allowed to visit the same vertex multiple times.
Furthermore, Garg et al.~\cite{DBLP:conf/soda/GargKK21} introduced the multiple time window model for Orienteering, for which they show APX-hardness for trees.

While approximation algorithms for the \OP with time windows have been considered before, there are no algorithmic results on restricted graph classes. This is particularly surprising since the problem with time windows is interesting from both an application and an algorithmic point of view even on very restricted graph classes: Here, an optimal route might pass an important point at a 'wrong' time and have to come back later.

In this paper, we investigate the \OP with time windows. For an overview of the presented results, see Table \ref{tab:overview}. Even for undirected paths, COP-1TW and OP-1TW are NP-hard, what we can follow from a similar problem considered in \cite{DBLP:journals/networks/Tsitsiklis92}. Furthermore, we show that COP-MTW and thus OP-MTW are NP-hard even on directed cycles. 
Most interestingly this is not true for one single time window per vertex: We give a polynomial-time algorithm for the COP-1TW. 
Moreover, we present a dynamic program for OP-MTW on directed paths. 
While the question of NP-hardness remains open for OP-1TW on directed cycles, we give an $\mathcal{O}(n^2 \log n)$-time algorithm for short time windows (which admits an FPT algorithm with respect to the number of long intervals), a 2-approximation, and a polynomial-time approximation scheme.

%----

% \medskip

\begin{table}[t]
\centering
\renewcommand{\arraystretch}{1.3}
\begin{tabular}{l||c|c|}
\cline{2-3}
 & Single time window & Multiple time windows  \\ \hline\hline
\multicolumn{1}{|l||}{Directed path}                & $\mathcal{O}(n \log n)$, Prop.~\ref{prop:dirOP} & $\mathcal{O}(m \log m)$, Prop.~\ref{prop:dirOP}  \\ \hline
\multicolumn{1}{|l||}{Directed cycle}     &  \makecell{\begin{tabular}{rl} COP-1TW:~~~ &  \hspace{-4mm} $\mathcal{O}(n^4)$, Thm.~\ref{thm:cycTSP} \\ OP-1TW:~~~ & %\hspace{-4mm} 2-approx. Cor.~\ref{cor:cycleApprox}, \\ &
 \hspace{-4mm} PTAS, Thm.~\ref{thm:OPTWdircyc-ptas} \\ & \hspace{-4mm} FPT, Cor.~\ref{cor:cycleFPT} \end{tabular}}      & NP-hard, Thm.~\ref{thm:cycMTW}             \\ \hline
 \multicolumn{1}{|l||}{Undirected path}   &    NP-hard  \cite{DBLP:journals/networks/Tsitsiklis92}    &        NP-hard  \cite{DBLP:journals/networks/Tsitsiklis92}             \\ \hline
%\multicolumn{1}{|l||}{Tree}               &        NP-hard \cite{}  &         \\ \hline
\end{tabular}\vspace{1mm}
\caption{Results for the \OP with one single or multiple time windows per vertex, where $m$ is the total number of time windows.}
\label{tab:overview}
\end{table}

%----

% \medskip

Another natural setting in applications, which is closely related to having time windows for the profits, is that edges cannot be used at all times, for example due to construction sights, road illumination or seasonal scenery. 
Formally, we obtain the \OP in a setting of dynamic graphs (OP-D for short) by adding one or more intervals to each edge in which that edge is traversable. 
Surprisingly, while the setting seems very similar to the \OP with time windows, we show that it is solvable in quadratic time on an undirected path. However, the \OP on dynamic graphs remains NP-hard on trees with unit profit and unit cost.

We further consider the \OP without time windows. We show that though this problem is NP-hard even on trees (and thus also on graphs with bounded \tw), we can give a $1+(\eps)$-approximation algorithm on graphs with bounded \tw. 
%We achieve this result by giving a dynamic program for the Orienteering Problem with bounded profits on graphs with bounded tree-width and expand the result to a $(1+\eps)$-approximation for the OP arbitrary profits. 
This stands in contrast to the \OP on general graphs that does not admit an efficient $(1+\eps)$-approximation unless P=NP.

\section{The \OP with Time Windows}\label{Sec:OPTW}
In this section, we study the \OP with time windows. 
We distinguish between the version of the problem
using a single time window per profit (OP-1TW) and using multiple time windows (OP-MTW). 
%We denote the version with one single time window by OP-1TW\todo{repeating of definitions?}. 
An instance of OP-1TW consists of an instance of the \OP together with an interval $[r_i, d_i]$ as time window for each of the vertices $v_i \in V(G)$. A walk $W$ through $G$ is now defined by a sequence of tuples $(v, t)$, where $v$ is a vertex and $t \in \mathbb{N}$ is a \emph{time step} such that if $(v,t)$ is followed by $(v',t')$ in $W$, then $t'\geq t+ c(vv')$. The profit of a vertex $v_i$ can only be collected once by $W$ and only if $W$ passes $v_i$ at time step $t$ with $r_i \leq t \leq d_i$. 

The OP-MTW generalized OP-1TW by replacing the time interval per vertex by a set $\mathcal{I}_{v_i} = \{[r_{i,1}, d_{i, 1}], \cdots [r_{i,m_i}, d_{i, m_i}]\}$ of $m_i$ time intervals. The profit of vertex $v_i$ can then be collected at any time step $t$ where there is an interval 
$[r_{i,j}, d_{i, j}] \in \mathcal{I}_{v_i}$, %for some $1 \leq j \leq m_i$,
such that $r_{i,j} \leq t \leq d_{i, j}$. 
Let $d_{max}$ be the latest deadline of all time windows and w.l.o.g.~assume that $d_{max} = B$.
% $I_j \in \mathcal{I}_{v_i}$ such that $r_{i,j} \leq t \leq d_{i,j}$. 

% For an instance of the \OP with time windows, with input graph $G=(V,E)$, the time windows for vertices are sets of intervals $\mathcal{I}_v$ for all $v \in V$. 
 
% We consider two types of time windows. 
% In the classic continuous model, each vertex $v_i$ is associated with a closed time interval $[r_i, d_i]$.
% We refer to $r_i$ as the release and $d_i$ as the deadline of $v_i$.
% In the discrete model, for each vertex $v_i$, we are given a set $T_i = \{[r_{i,1}, d_{i, 1}], \cdots [r_{i,m_i}, d_{i, m_i}]\}$ of $m_i$ available time intervals.
% The profit of a vertex can only be collected once and only at time instants in the available time window(s). 

As we will see in \Cref{sec:OP}, the \OP in general is NP-hard even on trees, and thus remains NP-hard in the version with time windows. 
However, in the setting of \OP with time windows, the problem is NP-hard even for undirected paths, which follows directly from previous results:

%
%
%This result is not extendable to versions with time windows, as we will see in the following section. We consider even more restricted input graphs with unit profit and unit edge weight and show, that the OP-MTW and even COP-MTW even if restricted to undirected paths with unit profit and unit edge weight is NP hard. If one vertex can have multiple time windows, NP-hardness still holds for directed cycles. However, there are some algorithmic results on directed cycles with continuous time windows and on directed acyclic graphs. 
%
The line-TSP problem as described and proven NP-complete by Tsitsiklis~\cite{DBLP:journals/networks/Tsitsiklis92} is closely related to the OP-1TW on undirected paths even with unit edge cost and unit vertex profit. In line-TSP, each job $j \in J$ has an integer position $x_j$ and time window $[r_j, d_j]$ and travel time between two jobs $j, j' \in J$ is defined as $|x_j - x_{j'}|$. The goal in line-TSP is to find a walk which collects the profit of all the jobs in $J$.
% \todo{What is wrong here? No complete sentence}
The difference to the OP is that in line-TSP two jobs can have the same position, which is not possible when each vertex has exactly one associated profit and all edges have unit costs. However, it is possible to reduce line-TSP to the OP-1TW on an undirected path with unit edge cost and unit vertex profit by a straightforward reduction.

\begin{remark} \label{rem:NP-hard}
    The OP-1TW and thus the OP-MTW on undirected paths is weakly NP-hard even assuming unit edge cost and vertex profit.
\end{remark}

\begin{proof}
From a line-TSP instance, we can construct an OP-1TW instance on an undirected path $G$ as follows.
Sort the jobs $J$ by their position $x$. 
From this sorted list, for each $1 \leq i \leq n$, add a vertex $v_i$ with time window $[n^2 \cdot r_i, n^2 \cdot d_i + n]$ to $V(G)$.
If $x_{i-1} = x_i$, then a single edge $(v_{i-1}, v_{i})$ is added to $E(G)$.
Otherwise, $v_{i-1}$ and $v_{i}$ are connected with a path of $(x_{i+1} - x{i}) \cdot n^2 - 1$ intermediate vertices, whose profit cannot be collected.
Finally, $\OPT$ is the number of distinct vertices collected by an optimal walk with cost $d_{max}$.
Then, a solution exists for line-TSP if and only if $\OPT = |V(G)|$ in our constructed instance.
\end{proof}

% \begin{proof}
%     Consider the problem Line-TSPTW problem as described and proven NP-complete by Tsitsiklis~\cite{DBLP:journals/networks/Tsitsiklis92}. 
%     In this problem, given a set of jobs $J$, with $n = |J|$. Every job $j_i \in J$, for $1 \leq i \leq n$, has an integer position $x_i$ and time window $[r_i, d_i]$. 

%     The main difference with Line-TSPTW, and TSP on undirected paths, is that multiple jobs are allowed to have the same integer position. Whereas on the undirected path, profits and time windows are 'connected' to distinct vertices, with different 'locations'.

%     We will construct an \OP instance with an undirected path $(V, E)$.
%     Sort the jobs on their position $x$. From this sorted list, add vertex $v_1$ to $V$ with time window $[n^2 \cdot r_1 - n, n^2 \cdot d_1 + n]$.
%     Then, for each $1 < i \leq n$ we add a vertex $v_j$ with time window $[n^2 \cdot r_j - n, n^2 \cdot d_j + n]$ to $V$. 
%     If $x_{i-1} = x_i$, then a single edge $(v_{i-1}, v_{i})$ is added to $E$. 
%     Otherwise, one edge is added in between. Otherwise, $(x_{i+1} - x{i}) \cdot n^2$ edges are added in between. Then, for $1 \leq i \leq n$, set the time window of vertex $v_i$
% \end{proof}

\subsection{Directed Path} \label{sec:DAGstw}

% We therefore first restrict the input graph of the Orienteering Problem to directed paths. We show a dynamic program that solves the \OP and thus the Travelling Salesperson Problem with discrete time windows on a directed path in polynomial time. In the notation of the recursive function we use a slightly modified profit function that takes as a parameter both the vertex and the time of visit, i.e.~\begin{equation}\centering
% \pi(v_i, t) = \begin{cases}
%     0 & \text{if } t < r_i \vee t > d_i, \\
%     \pi(v_i) & \text{otherwise.}
% \end{cases}\end{equation}

% \begin{algorithm}
% \begin{equation}
%     T[i] = 
%     \begin{cases}
%         \{0: 0, ~r_{i, 1}: \pi(v_i)\} & i = 0 \\
%         T[i-1] \cup \{r_{i, j}: T[i-1][r_{i, j} - c_i] +   | 1 \leq j \leq m_j\} & otherwise
%     \end{cases}
% \end{equation} 
% \end{algorithm}

While on an undirected path the problem is NP-hard, the \OP with time windows is tractable in polynomial time if the input graph is a directed path. This holds even for multiple time windows:
%We first consider the OP-MTW on directed paths.
%In the OP-MTW, for each vertex $v_i$, we are given a set $\{[r_{i,1}, d_{i, 1}], \cdots [r_{i,m_i}, d_{i, m_i}]\}$ of $m_i$ available time intervals. The profit of a vertex can only be collected once and only at time instants in the available time windows. Trivially, any algorithm that solves the OP-MTW, also solves the OP-1TW. Hence, we show the following lemma.

\begin{proposition}\label{prop:dirOP}
Given an instance of the OP-MTW with a directed path $G$, arbitrary profits, and arbitrary edge weights, the OP-MTW can be solved in $\mathcal{O}(m \log m)$ time, where $m$ is the total number of time windows.
\end{proposition}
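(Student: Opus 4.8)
The plan is to set up a left-to-right dynamic program along the directed path. Label the vertices $v_1, v_2, \dots, v_n$ in the unique order imposed by the directed path, so that any walk starting at $s = v_k$ can only move monotonically to the right and (since edges only go one way) never returns. Thus a feasible walk is completely determined by how long it pauses at each vertex before moving on: its trajectory is a non-decreasing step function of time, and the set of vertices it can possibly collect is a contiguous suffix $v_k, v_{k+1}, \dots, v_j$ for some $j$ with $\sum_{i=k}^{j-1} c(v_i v_{i+1}) \le B$. The only freedom is the *arrival time* $t_i$ at each $v_i$, subject to $t_k \ge 0$ (or $t_k = 0$ w.l.o.g.), $t_{i+1} \ge t_i + c(v_i v_{i+1})$, and $t_j \le B$; vertex $v_i$ contributes $\pi(v_i)$ iff some interval in $\mathcal{I}_{v_i}$ contains $t_i$.

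First I would observe that there is no benefit to waiting unnecessarily: we may assume the walk arrives at $v_i$ as early as the schedule allows while still hitting a chosen window. Concretely, define for each vertex $v_i$ the *minimum time to reach $v_i$ having collected its profit*, which forces $t_i$ into one of its intervals. The DP state is then "the earliest time at which the walk can be located at $v_i$", and we want, among all reachable suffixes, the one maximizing collected profit. I would process vertices left to right, maintaining the earliest feasible arrival time at the current vertex; at $v_i$ we branch on whether to collect its profit (snap the arrival time up to the left endpoint of the earliest interval in $\mathcal{I}_{v_i}$ that is $\ge$ the current arrival time and still $\le d_{max}$) or skip it (keep the current time). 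Because pushing the arrival time *earlier* can never hurt future decisions (all constraints are of the "arrive by deadline" type and travel costs are fixed), a greedy/DP that always keeps the earliest arrival time consistent with the set of profits collected so far is optimal — this is the exchange argument at the heart of the correctness proof.

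Actually, to get the stated $\mathcal{O}(m\log m)$ bound rather than something depending on $n$, I would not iterate over all $n$ vertices naively but only over the $m$ time windows (equivalently, the vertices that actually carry a window / nonzero profit, plus the implied travel offsets). Sort all interval endpoints; then sweep through the "interesting" vertices in path order, and at each one do a binary search into its sorted interval list to find the earliest usable window given the current earliest-arrival time. Maintaining the running value and the running earliest time costs $\mathcal{O}(1)$ per window after the $\mathcal{O}(m \log m)$ sort. The main obstacle — the part that needs care rather than cleverness — is the correctness argument that "keep the earliest possible arrival time" is without loss of generality: one must show that if an optimal walk arrives at some $v_i$ later than our DP does, we can shift its entire suffix earlier by the slack without violating any travel constraint or leaving any already-counted window, which is exactly where monotonicity of the directed path and the interval (deadline) structure are used. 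A secondary subtlety is handling the choice of where the walk *stops*: since stopping earlier only relaxes the budget constraint, the optimal stopping vertex is simply the last one whose prefix travel cost is within $B$, and we take the best profit accumulated up to any such prefix — so the DP should track the running maximum, not just the value at $v_n$.
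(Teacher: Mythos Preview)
Your plan has a genuine gap: the DP state you describe is too small. You propose to maintain ``the running value and the running earliest time'' --- a single pair $(\text{profit}, t)$ --- and at each vertex either snap $t$ up into a window or leave it. But a single pair is not enough, because collecting a cheap profit now can cost you an expensive profit later. Concretely, take unit edge costs and vertices $v_1, v_2, v_3$ with windows $\{[0,0]\}$, $\{[5,5]\}$, $\{[2,2]\}$ and profits $1, 1, 100$. Your scheme, starting from time $0$, would snap to $5$ at $v_2$ (the earliest usable window), then reach $v_3$ at time $6$ and miss it; the optimum skips $v_2$. Your exchange argument (``shift the suffix earlier'') only shows that for a \emph{fixed} set of collected profits the earliest arrival schedule is best --- it says nothing about which set to collect, and that choice is exactly where the branching you mention would blow up to $2^n$ states if done naively.

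What the paper actually maintains at each vertex $v_i$ is the entire non-decreasing step function $t \mapsto$ (maximum profit of a walk ending at $v_i$ at time $t$), i.e.\ the full Pareto frontier of (arrival time, profit) pairs. This function has at most $O(m)$ breakpoints overall (each window can create at most one new breakpoint and can cause some old ones to be absorbed), and the paper stores it in a 1D range tree so that processing a single time window --- inserting the new breakpoint at the release time, deleting dominated breakpoints to the right, and adding $\pi(v_i)$ over the relevant range --- costs amortised $O(\log m)$. That data-structure step, and the observation that the profit function is monotone in $t$, are the missing ingredients; the ``binary search into its sorted interval list'' you sketch does not do this job.
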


\begin{proof}
    Assume w.l.o.g.~that every vertex has at least one time window, i.e.~$m \geq |V(G)|$. 
    In case there exists a vertex $v$ without a time window, $v$ can simply be removed and the incident edges merged.
    Furthermore, let $V(G) = \{v_1, \cdots, v_n\}$ with $E(G) =\{\{v_i, v_{i+1} \} \mid 1 \leq i < n\}$.
    We specify a dynamic programming algorithm. 
    For $1 \leq i \leq n$ and $t \in [0, d_{max}]$, let $W_{i, t}$ be a walk with maximum profit starting at $v_1$ and ending at vertex $v_i$ at time $t$. 
    Furthermore, let $c_i$ be the cost of the edge between $v_{i-1}$ and $v_i$. 
    
    The key observations of our approach are as follows.
    For any $1 \leq i \leq n$ and $t, t' \in [0, d_{max}]$ such that $t' > t$, the profit of $W_{i, t'}$ is greater than or equal to the profit of $W_{i, t}$. Since, $W_{i, t'}$ can always collect at least the profit of $W_{i, t}$ by following the path of $W_{i, t}$ and waiting in $v_i$ until $t'$.
    Therefore, for any $1 \leq i \leq n$ and $t \in [0, d_{max}]$, if $t$ is in a time window of $v_i$, the profit of $W_{i, t}$ is equal to the sum of the profit of $W_{i-1, t-c_i}$ and $\pi(v)$. Otherwise, $W_{i, t}$ has or has not collected $v_i$. If $v_i$ was not collected, the profit of $W_{i, t}$ is simply equal to that of $W_{i-1, t - c_i}$. If, however, $v_i$ was collected by $W_{i, t}$, then the walks profit is equal to the sum of $\pi(v_i)$ and profit of $W_{i, d_{i, j}}$, where $d_{i, j}$ is the latest deadline of $v_i$ before time $t$.
    
    In order to find the walk $W_{i, d_{max}}$ with the highest profit.
    We will maintain a 1D range tree $T$, where the nodes of $T$ correspond to time intervals. 
    Specifically, at iteration $i$, a leaf $\ell$ of $T$ corresponds to interval $[\ell_l, \ell_r)$, and every leaf is augmented with a walk $W_{i, l}$ and a profit value $\pi_\ell$. 
    Each internal node $w \in T$ is associated with the interval $[\ell'_l, \ell''_r)$, where $\ell'$ is the left endpoint of the leftmost leaf and $\ell''$  is the right endpoint of the rightmost leaf of the subtree rooted at $w$. 
    Furthermore, internal nodes are also augmented with some profit value $\pi_w$.
    We define query $T(t)$ to return the sum of profit values stored in the nodes in the path going from the root of $T$, down to the leaf $\ell$ such that $t \in [\ell_l, \ell_r)$.

    After initialisation of $T$, we will for each vertex iteratively update $T$.  
    The loop invariant that will be maintained in our algorithm, is that after the $i$th iteration for any time $t \in [0, d_{max} - c]$, $T(t)$ will give the profit collected by an optimal walk ending in vertex $v_i$ at time $t + c$, where $c$ is the cost of a shortest walk travelling from $v_1$ to $v_i$.
    When the aforementioned invariant is maintained, $T(t)$ is always monotonically increasing as the query returns the profit of $W_{i, t}$.
    
    Moreover, the algorithm will maintain the values $\Pi_{max}$ and $c$. Value $\Pi_{max}$ is used to keep track of and return the profit of an optimal walk. Cost $c$ will be increased iteratively, such that at iteration $i$, it stores the cost of the shortest walk that goes from $v_1$ to $v_i$.
    
    Starting with $i = 1$, the only critical time points are $t = 0$ and $t = r_{1, 1}$ because waiting for longer than $r_{1, 1}$ will give us not more than $\pi(v_1)$ profit.
    Hence, for some small constant $\eps > 0$, we can initialise $T$ with two leaves, i.e.~intervals $[0, r_{1, 1})$ and $[r_{1, 1}, d_{max}+\eps)$, storing profits $0$ and $\pi(v_i)$ respectively. 
    Then, trivially, for any time $t \in [0, d_{max}]$, $T(t)$ will return the optimal profit for a walk ending at $v_1$ at time $t$. Thus, we initialise $\Pi_{max} := T(d_{max})$ and $c := 0$.

    Then, for each $1 < i \leq n$, let $c_i$ be the cost of the edge between $v_{i-1}$ and $v_i$, we update $c := c + c_i$.
    For each $1 \leq j \leq m_i$, update $r'_{i,j} = r_{i, j} - c$ and $d'_{i, j} = d_{i, j} - c$.
    Then, locate the leaves $l_a$ and $l_b$ in $T$, such that $r'_{i, j} \in [a_l, a_r)$ and $d'_{i, j} \in [b_l, b_r)$, respectively.
    Thereafter, we remove $[a_l, a_r)$ from $T$ and insert $[a_l, r'_{i, j})$ and $[r'_{i, j, a_r})$ in $T$. Where in both leaves the same profit will be stored as was stored in $l_a$.
    Then, while $T(b_l) + \pi(v_i) \geq T(b_r)$, locate the leaf where $b_r \in [b_r, b_r')$ and remove this leaf from $T$. Update $b_r := b_r'$ and repeat until either $T(b_l) + \pi(v_i) < T(b_r)$ or $b_r = B + \eps$.
    We can then identify the $\mathcal{O}(\log m)$ subtrees that cover the range $[r_{i, j}, b_r)$ and to the profit values of each root of these subtrees we add $\pi(v)$.

    After all time windows of $v_i$ have been processed, we update $\Pi_{max} := \max(\Pi_{max}, T(d_{max} - c))$. Then, after $n$ iterations, we simply can return $\Pi_{max}$. 

    We will now show the correctness of the presented algorithm.
    Let $T$ and $T'$ be the balanced search trees before and after the $i$th iteration, respectively.
    Furthermore, let $c$ be the distance between $v_1$ and $v_{i-1}$ and $c'$ the distance between $v_1$ and $v_i$.
    We assume that for $t \in [0, d_{max} - c]$, $T(t)$ gives the profit of an optimal walk ending in the $v_{i-1}$th vertex at time $t + c$.
    We will show by the use of contradiction that, assuming the loop invariant holds for $T$, the loop invariant also holds for $T'$.

    Consider the walk $W_{i, t' + c}$ that collects $\OPT$ profit and ends in vertex $v_i$ at time $t' + c$. However, assume that $T'(t') \neq \OPT$. We will consider two cases:
    \begin{itemize}
        \item $\exists_{1\leq j\leq m_i} t' + c \in [r_{i, j}, d_{i, j}]$. Since $T$ is monotone in the profit and the profit of $v_i$ is always collected: $\OPT = T(t - c_i) + \pi(v_i) = T'(t)$.
        \item $\nexists_{1\leq j\leq m_i} t \in [r_{i, j}, d_{i, j}]$. Then, there are two possibilities for $W_{i, t' + c}$, either $W_{i, t' + c}$ collected $v_i$ or not. If $v_i$ was collected, then $W_{i, t' + c}$ must have been at $v_i$ at time $d_{i, j'}$, where $d_{i, j'}$ is the last deadline before time $t$. Therefore, again because of the monotonicity of $T$, $\OPT = T(d_{i, j'} - c_i) = T'(t)$. However, if $v_i$ has not been collected, then $\OPT = T(t-c_i) = T'(t)$.
    \end{itemize}

    Hence, the loop invariant is true at initialisation and is maintained throughout the algorithm.
    Furthermore, for every time window, only a constant number of insert, delete, and search operations are performed in the range tree, each of which costs $\mathcal{O}(\log m)$. Thus, the optimal profit can be found in $\mathcal{O}(m \log m)$ time.
\end{proof}

\subsection{Directed Cycle}\label{Sec:OPTWdir}

% Even though, when we extend the input graph to a directed cycle
%Even though the problem does not seem much harder, if we extend the input graph to a directed cycle, we will see in this section that a polynomial time algorithm to solve TSP with single time windows looks much more complicated, and TSP with multiple time windows even is NP-hard on directed cycles. 
%TODO: Should we mention OP on cycles with single time window here?

So far we have seen that
%We have now proven that 
on an undirected path, the \OP with time windows is NP-hard, while on a directed path, it is solvable in polynomial time (regardless of the number of time windows). 
What seemingly makes the problem for directed path easier is that we cannot revisit vertices.
%The reason why our algorithm for directed paths is not extendable to undirected paths lies in the number of decisions we can make at each time step: For a directed path, we can only decide whether the walk takes a step forward or waits. For the undirected path, we can go forward, wait or go back. 
A natural question now is, what happens for directed cycles. In this graph class, a walk
can revisit vertices. But in contrast to undirected path, decisions are more limited because we can only walk in one direction (or wait at a vertex).
%% kevin: mostly edited for shortening
%still can only either wait or go forward in the cycle. In contrast to the directed path, however, vertices can be reached again after leaving them. 
We will see that this already makes the problem more complicated to solve than for directed path. Therefore, we start with the Covering \OP (COP) with time windows, which asks if there is a walk such that the profit of every vertex can be collected. 
\subsubsection{Single Time Windows} \label{sec:DCstw}

%\paragraph{Covering Orienteering Problem with Single Time Windows}

%We don't really need the spiral graph anymore, the proof works perfectly fine without the visualization and I don't think it gives much extra insight.
We start by considering the COP-1TW model on directed cycles, i.e.~every vertex $v_i$ has exactly one time window $[r_i, d_i]$. Recall that a COP-1TW instance is a yes-instance if and only if there exists a walk where all profits are collected.
% We can visualize a path on a temporal cyclic graph \ref{fig:tmpC1} where a node further away from the cycle is further in time. Any valid schedule on such a path must be monotone in the time dimension and not rotate in the opposite direction as the cycle's.
%By Lemma~\ref{lem:dmax2n} we can assume now the last deadline $d_{max}$ is at most $4nC$. 
\begin{lemma}\label{lem:dmax2n}
    Given an instance $I$ of the COP-1TW, there is a reduced instance $I'$ where the latest deadline is upper bounded by $4nC$ (where $C$ is the length of cycle), such that $I$ is a yes-instance if and only if $I'$ is a yes-instance.
\end{lemma}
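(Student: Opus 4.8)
The plan is to show that if a covering walk exists at all, then a covering walk exists whose total length (= latest time it needs to use) is at most $4nC$, where $C$ is the length of the cycle and $n = |V(G)|$. Equivalently, I want to bound the number of times an optimal covering walk goes around the cycle. On a directed cycle a walk is completely described by its starting vertex $s$, and a nondecreasing sequence of ``departure times'' — or, more usefully, by the sequence of *arrival times at $s$* after each full loop together with the amount of waiting inserted at each vertex during that loop. So first I would reparametrize: think of the walk as making $k$ consecutive laps, and on lap $j$ it may pause at various vertices, but the only thing that matters for collecting a profit is that the walk is at $v_i$ at some integer time inside $[r_i,d_i]$.

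The key step is a shifting/merging argument. Suppose the walk performs $k > $ (the target bound)$/C$ laps. Partition the vertices by which lap their profit was actually collected in; say $v_i$ is collected on lap $\lambda(i)$. If two consecutive laps $j$ and $j+1$ are such that no vertex *must* be collected between them in a way that forces the extra loop — more precisely, if the set of deadlines $\{d_i : \lambda(i) = j+1, \ldots\}$ that are ``responsible'' for needing lap $j+1$ can instead be served either on lap $j$ or on a later lap after deleting one loop — then I can excise one full traversal of the cycle, shifting everything after it earlier by $C$, and still have a valid covering walk (validity to check: release times $r_i$ are still respected since arrival times only decreased, but we must not arrive *before* $r_i$; this is why waiting time is kept as a free resource). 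The cleanest way to make this rigorous is: among all covering walks, take one with the *lexicographically smallest* vector of lap-indices $(\lambda(v_1),\dots,\lambda(v_n))$ (sorted), or simply one minimizing $\sum_i \lambda(i)$; then argue that in such a minimal walk, every lap $j \le k$ must be ``witnessed'' by at least one vertex $v_i$ with $\lambda(i) = j$ whose time window $[r_i,d_i]$ genuinely requires being visited on that lap and not earlier — i.e. $d_i < r_{i'} + (\text{offset})$ for the witness of some earlier lap. Chaining these forced inequalities across laps, together with the fact that within one lap the window structure can force at most $O(n)$ ``stalls'' each costing at most $C$ (you never need to wait longer than $C$ at a vertex, since waiting a full cycle-length is pointless on a directed cycle), gives the bound: at most $2n$ witnesses, each contributing at most $2C$ to the horizon, hence $d_{\max} \le 4nC$.

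More concretely, I would argue: (i) we may assume every $r_i, d_i \le $ (current horizon), and we may also assume $d_i - r_i < C$ is *not* needed — windows can be long; (ii) WLOG the walk never waits at a vertex for $\ge C$ time units, because waiting $C$ units is equivalent to doing nothing (you return to the same configuration a lap later but having made no progress), so total wait is $< nC$ per lap is the wrong bound — rather, total *useful* wait over the whole walk is $< nC$ total, since each vertex is visited for collection purposes effectively a bounded number of times. Combining travel ($kC$) with total wait, and then bounding $k$: in a walk minimizing $\sum \lambda(i)$, lap $j+1$ exists only because some $v_i$ with $\lambda(i)=j+1$ has $r_i$ large enough that it could not be reached on lap $j$, OR has $d_i$ too small to wait until lap $j+1$... the second is impossible by definition of $\lambda$, so it's the first: $r_i > (\text{arrival time of the walk at } v_i \text{ on lap } j)$. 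Ordering these forced release times, they strictly increase by at least... and there are at most $n$ of them, and consecutive ones differ by at most $C + (\text{wait budget})$. I'd clean this into: $k \le 2n$, wait $\le nC$, so horizon $\le 2nC + nC + nC \le 4nC$; then reset all release/deadline values to $\min(\cdot, 4nC)$, truncate $B$ to $4nC$, obtaining $I'$.

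\textbf{Main obstacle.} The delicate part is the validity check when excising a loop: decreasing arrival times is free for deadlines but can violate *release* times $r_i$ of vertices that had been collected ``late by luck.'' The fix is to carry the waiting time explicitly and, when removing a loop of length $C$, re-insert up to $C$ units of waiting at each affected vertex as needed — this is always affordable precisely because we simultaneously gained $C$ units of budget by removing the loop. Making this bookkeeping airtight (that the reinserted waiting never cascades to violate a later deadline, which needs the ``no wait $\ge C$'' normalization) is where the real work lies; everything else is the pigeonhole counting of forced laps sketched above.
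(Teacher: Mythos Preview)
Your plan rests on a claim that is simply false: you assert that whenever a covering walk exists, one exists whose total duration is at most $4nC$. Take $n=2$ on a directed cycle of length $C=2$ with windows $[0,0]$ for $v_1=s$ and $[100,100]$ for $v_2$. This is a yes-instance (collect $v_1$ at time $0$, reach $v_2$ and wait until time $100$), yet every covering walk must be at $v_2$ at time $100$, so no walk of duration $\le 4nC = 16$ covers the instance. Your normalisation ``never wait $\ge C$ at a vertex'' does not help: replacing a long wait by laps does not shorten the walk, it only trades waiting for travelling, and your subsequent pigeonhole on laps then blows up instead. So the whole minimisation/excision argument is aimed at the wrong target.

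Consequently your construction of $I'$ by truncation, ``reset all release/deadline values to $\min(\cdot,4nC)$'', breaks the equivalence. With $n=3$, $C=3$ (so $4nC=36$), start vertex $v_1$, and windows $[0,0]$, $[100,100]$, $[200,200]$ for $v_1,v_2,v_3$, the original instance is a yes-instance, but after truncation both $v_2$ and $v_3$ get the window $[36,36]$, which is infeasible since the walk cannot be at two distinct vertices at the same time. The paper does something entirely different: it sorts all $2n$ release/deadline events, and whenever two \emph{consecutive} events are more than $2C$ apart it shrinks that gap to exactly $2C$, shifting all later events by the same amount. The point is that a gap of length $2C$ is already enough to do one full lap and then reach any prescribed vertex, so any collection pattern realisable across a long gap is realisable across the shrunk one (and vice versa). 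This is a transformation of the time axis that preserves the \emph{order} of all events; your truncation collapses distinct late events onto the same point and destroys exactly this order information.
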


\begin{proof}
    Given an instance $I$ of the COP-1TW, we encode $I$ as a sequence of events as follows. 
Take all release times and deadlines of instance $I$ and sort them in ascending order. 
Let $\{t_1, t_2, \cdots, t_{2n}\}$ be the resulting sequence, denoted as the  \emph{event sequence} of $I$. 
Obtain the event sequence  $\{t'_1, t'_2, \cdots, t'_{2n}\}$  of the reduced instance $I'$ as follows:
Start with the first event and check each consecutive pair $t_i, t_{i+1}$ (these can be either release times or deadlines) for $i <  2n-1$.
If $t_{i+1} - t_i \leq 2C$, then set $t'_i = t_i$ and $t'_{i+1} = t_{i+1}$.
Otherwise, if  $t_{i+1} - t_i > 2C$, then shrink their gap and update $t'_j := t_j - (t_{i+1} - t_i - 2C)$ for $i <j \leq 2n$. 
Note that after applying this for every of the $2n$ consecutive pairs, the last event (the latest deadline) is at most $4nC$.

Since the order of the events remains the same after the reduction, we get the following key observation:
For each vertex $v_j$ and each event gap $[t_i, t_{i+1}]$, the time window of $v_j$ includes the interval $[t_i, t_{i+1}]$ if and only if the time window of $v_j$ in instance $I'$ includes the interval $[t'_i, t'_{i+1}]$. 

Assume $I$ is a yes-instance and $S$ is a feasible solution of $I$. 
We now show how to construct a feasible solution $S'$ for the instance $I'$. 
For each consecutive pair of events $[t'_i, t'_{i+1}]$ for $1 \leq i \leq n$, if their gap has length less than $2C$, $S'$ takes the route in $S$ between $[t_i, t_{i+1}]$.
Observe that any vertex which has been collected in $S$ between $[t_i, t_{i+1}]$ will be visited and collected between $[t'_i, t'_{i+1}]$ in $S'$.
We consider event gaps of length $2C$.
Let $v_{S, t_i}, v_{S, t_{i+1}}$ be the vertices visited in $S$ at time $t_i$ and $t_{i+1}$, respectively. 
For each event gap $[t'_i, t'_{i+1}]$ of length $2C$, $S'$ takes a non-stop walk starting at 
$v_{S, t_i}$, back to $v_{S, t_i}$ and ending at $v_{S, t_{i+1}}$.
Note that every vertex is visited at least once in $[t'_i, t'_{i+1}]$ and thus every vertex collected by $S$ in $[t_i, t_{i+1}]$ is collected by $S'$ in $[t'_i, t'_{i+1}]$. 
Overall, $S'$ is a feasible solution for collecting all of the vertices in $I'$.

The other direction is analogous.
Let $S'$ be a feasible solution of $I'$. 
For sequential event pairs $t'_i$, $t'_{i+1}$, take the walk between $[t_i, t_{i+1}]$ in $S$ and add extra waits at $t_{i+1}$ if the interval $[t_i, t_{i+1}]$ has length greater than $2C$.
By doing so, each collected vertex collected in $[t'_i, t'_{i+1}]$ is visited and collected in $[t_i, t_{i+1}]$.
\end{proof}

Using Lemma \ref{lem:dmax2n} the following theorem can be proven.

\begin{theorem}\label{thm:cycTSP}
    Given an instance of the COP-1TW on a directed cycle $G$ with arbitrary profits and edge weights, the COP-1TW can be solved in $\mathcal{O}(n^4)$.
    % Let $G$ be a directed cycle with arbitrary profits and edge weights. COP-1TW can be solved in $\mathcal{O}(n^4)$, where $n = |V(G)|$.  
\end{theorem}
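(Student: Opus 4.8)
The plan is to first shrink the time horizon and then run a dynamic program over an unrolled copy of the cycle, always keeping the \emph{collected profit} as the quantity the program optimises, so that COP-1TW is answered by comparing the best collectable profit against the total profit $\sum_i \pi(v_i)$. Concretely, I would begin by applying \Cref{lem:dmax2n} to replace the instance by an equivalent one with $d_{max} \le 4nC$, where $C$ is the length of the cycle. Because one full traversal of the cycle costs exactly $C$, every walk respecting this horizon makes at most $L = \lceil d_{max}/C\rceil = \mathcal{O}(n)$ passes, so each vertex $v_i$ is passed at most $\mathcal{O}(n)$ times and there are only $M = \mathcal{O}(n^2)$ \emph{visit-instances} $(v_i,\ell)$ (vertex $v_i$ on the $\ell$-th pass) that any candidate walk could use. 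I would stress that this polynomial horizon bound is exactly what the covering requirement buys us and is \emph{not} available for general profit maximisation: an optimal OP-1TW walk may have to wait far beyond $4nC$ to collect a single late profit, which is why that variant only admits the PTAS of \Cref{thm:OPTWdircyc-ptas}, whereas here collecting \emph{every} profit forces a bounded schedule.

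Second, I would exploit the rigidity of a directed cycle to describe a walk by a single monotone parameter. Since the position may only advance along the unique out-edge or pause, the no-wait arrival time $T_{i,\ell} = \mathrm{dist}(s,v_i) + (\ell-1)C$ at each instance is fixed in advance (this is where arbitrary edge weights enter, and nowhere else), and the total accumulated waiting is non-decreasing along the walk. Hence the profit of $v_i$ can be collected at instance $(v_i,\ell)$ precisely when the accumulated wait $w\ge 0$ at that instance lies in the shifted window $[\,\max(0,r_i - T_{i,\ell}),\; d_i - T_{i,\ell}\,]$. Observing that it is never beneficial for a wait to end strictly between two release times, I would restrict $w$ to the $\mathcal{O}(n)$ breakpoints induced by the release times, turning the task into selecting, for as many vertices as possible (all of them, for a yes-instance), an instance together with a globally non-decreasing waiting schedule landing inside the respective windows.

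Third, I would run a dynamic program that processes the $M=\mathcal{O}(n^2)$ visit-instances in order of increasing $T_{i,\ell}$ and, for each instance and each admissible accumulated-wait breakpoint, stores the maximum total profit collectible by a walk reaching that instance with that accumulated wait. Carrying the \emph{profit} as the optimised value is what lets the program handle arbitrary vertex profits directly, and reporting a yes-instance for COP-1TW amounts to testing whether some state attains the full profit $\sum_i \pi(v_i)$. Combining the $\mathcal{O}(n^2)$ instance states against their $\mathcal{O}(n)$ predecessors at $\mathcal{O}(n)$ wait breakpoints yields the claimed $\mathcal{O}(n^4)$ bound. The step I expect to be the main obstacle is guaranteeing correctness of this profit recurrence without an exponential bookkeeping of the already-collected set: because a vertex reappears on up to $L$ passes, I must argue that its collectable passes form a \emph{contiguous} range (the windows shift down by $C$ per pass), so a single left-to-right sweep never needs to revisit it, and I must prove a dominance property — reaching an instance with smaller accumulated wait can collect a superset of all future profits — so that keeping only the profit-maximal state per (instance, wait) pair discards no feasible completion. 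Establishing these two exchange arguments, together with checking that \Cref{lem:dmax2n} preserves exactly the set of collectable vertices and that ties among equal $T_{i,\ell}$ (or zero-cost edges) are ordered consistently with the walk, is the technical heart of the proof.
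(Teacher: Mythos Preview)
Your plan diverges substantially from the paper's proof, and it contains a gap that I do not see how to close with the tools you list.

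The paper does not run a profit-maximising dynamic program at all. After invoking \Cref{lem:dmax2n} it maintains a single schedule $T=(T_{i,j})_{1\le i\le n,\,1\le j\le 4n}$, initialised to the no-wait schedule, and repeatedly locates a vertex $v_i$ whose window is currently missed, pushes $T_{i,j}$ (for the largest $j$ with $T_{i,j}<r_i$) up to $r_i$, and propagates the delay forward. The loop invariant is that $T$ is pointwise a lower bound on \emph{every} valid schedule; hence if ever $T_{i,1}>d_i$ no covering walk exists, and otherwise the process terminates with a valid schedule after $\mathcal{O}(n^2)$ pushes of $\mathcal{O}(n^2)$ work each. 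Profits are never counted: the algorithm is a feasibility search, not an optimisation, which is precisely what sidesteps the difficulty you yourself flag in the last paragraph.

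The gap in your approach is double-counting. Your state $(\text{instance }(v_i,\ell),\ \text{accumulated wait }w)$ does \emph{not} determine which vertices have already been collected: two walks can reach the same state having distributed the same total wait $w$ differently across earlier instances and hence having hit different subsets of past windows. In particular, for a vertex $v_j$ with a long window ($d_j-r_j\ge C$), one walk reaching your state may already have collected $v_j$ on an earlier pass while another has not, yet your table keeps only the larger of the two profit values. When the sweep later reaches $(v_j,\ell')$ and the window is again open, the recurrence will add $\pi(v_j)$ a second time. Your two proposed safeguards do not prevent this: contiguity of the collectable passes is a statement about a single fixed walk, not about the aggregated DP state, and the dominance property (smaller wait implies a superset of \emph{future} profits) constrains what can still be collected, not double-booking of the past. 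Consequently the DP can report ``maximum profit $=\sum_i\pi(v_i)$'' on instances that admit no covering walk, and the feasibility test is unsound. Repairing this appears to require either tracking the collected set (exponential) or committing each vertex to a specific pass in advance --- which is, in effect, what the paper's schedule-pushing argument accomplishes implicitly via the lower-bound invariant. A secondary point: the claim of only $\mathcal{O}(n)$ wait breakpoints needs more care, since the natural candidates $r_j-T_{j,m}$ range over $\mathcal{O}(n^2)$ values rather than $\mathcal{O}(n)$.
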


\begin{proof}
Let $C$ be the total length of the directed cycle.
W.l.o.g.~by Lemma \ref{lem:dmax2n}, assume that the latest deadline $d_{max} \leq 4nC$. We initialise a \emph{schedule} $T$ containing for every $1 \leq i \leq n$ and $1 \leq j \leq 4n$ a value $T_{i, j}$. Time $T_{i, j}$ is the time until which the corresponding walk $W_T$ visits the $i$th vertex for the $j$th time. There is a natural order on the tuples: we define $i', j' \lhd i, j$ as the lexicographic order first on $j'$ and $j$ then on $i'$ and $i$.
Furthermore, for each $1 \leq i, j \leq n$ let $d(v_i, v_j)$ be the distance of the shortest walk travelling from $v_i$ to $v_j$.
Initially, $T$ is a schedule, where $W_T$ never waits, i.e.~$T_{i, j} := d(v_1, v_i) + C \cdot (j-1)$ (see Figure \ref{fig:alg_examp:1}).
We call $T$ a \emph{valid} schedule, if the walk $W_T$ \emph{hits} all intervals, i.e.~$W_T$ visits all the vertices within their time intervals.
Moreover, $W_T$ is called \emph{proper} if the walk is monotone with respect to time and always either waits at vertices or follows the direction of the cycle.
We will modify $T$ iteratively until it is either valid or a conflict occurs, meaning there is no valid schedule.
Apply the following modification iteratively: First, schedule $T$ is tested to see if it is valid. If $T$ is valid, $W_T$ can be returned, otherwise the algorithm continues with the following steps. 
If there exists a vertex $v_i$ with time window $[r_i, d_i]$ that walk $W_T$ does not hit, then we first check if $T_{i, 1} > d_i$. If $T_{i, 1} > d_i$, then we return that no valid solution exists. 
Otherwise, set 
\begin{align}
    &j := \max\{j | T_{i, j} < r_i\} \text{~~~~~and} \label{eq:chooseJfullproof} \\
    &T_{i, j} := r_i. \label{eq:upTfullproof}
\end{align}
%\begin{equation}\label{eq:chooseJ}j := \max\{j | T_{i, j} < r_i\}\end{equation} and set %\begin{equation}\label{eq:upT}T_{i, j} := r_i.\end{equation}
Further, for all $T_{i', j'}$ with $i, j \lhd i', j'$ set \begin{equation}
    T_{i', j'} := \max(T_{i', j'}, r_i + d(v_i, v_{i'}) + C \cdot (j' - j)).
\end{equation} 

Then, we start again at the beginning of this loop and test if there exists a vertex with a time window that is not currently hit.
% Not sure how to make this work better, but I would like to refer to these equations later on without rewriting the equaitons.

\begin{figure}[t]
     \centering
     \begin{subfigure}[t]{0.32\textwidth}
         \centering
         \includegraphics[page=4]{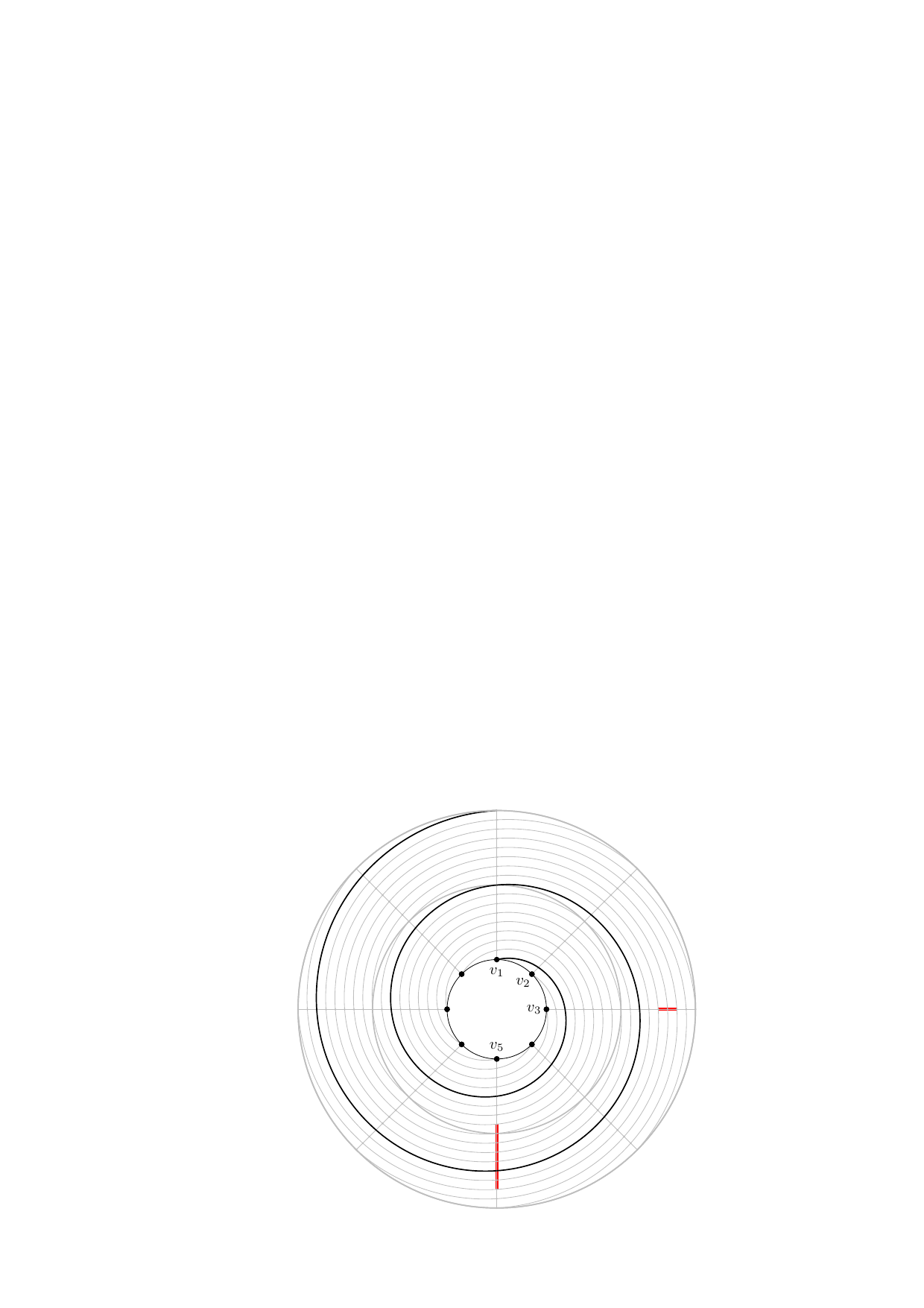}
         \caption{The walk $W_T$ directly after initialisation of schedule $T$ which does not wait.}
         \label{fig:alg_examp:1}
     \end{subfigure}
     \hfill
     \begin{subfigure}[t]{0.32\textwidth}
         \centering
         \includegraphics[page=5]{temporal_graphs.pdf}
         \caption{Vertex $v_3$ is not collected, hence time $T_{3, 2}$ was increased to $r_3 = 12$ and the rest of the schedule is updated sequentially.}
         \label{fig:alg_examp:2}
     \end{subfigure}
     \hfill
     \begin{subfigure}[t]{0.32\textwidth}
         \centering
         \includegraphics[page=6]{temporal_graphs.pdf}
         \caption{Now vertex $v_5$ is no longer collected, hence time $T_{5, 1}$ was increased to $r_5 = 7$.}
         \label{fig:fig:alg_examp:3}
     \end{subfigure}
        \caption{An example of two iterations of the algorithm described in Theorem \ref{thm:cycTSP}. The temporal graphs illustrate possible walks on a clockwise directed cycle with $n = 8$ and unit edge weights, edges in the temporal graph cannot be traversed in counterclockwise order. 
        % Furthermore, when a walk moves to an outer layer, one unit of time has passed. 
        % We focus on the time windows of $v_3$ and $v_5$, which are $[12, 15]$ and $[7, 13]$ respectively.
        }
        \label{fig:algoExample}
        \vspace{-5mm}
\end{figure}

Two iterations of the algorithm are illustrated in Figure \ref{fig:algoExample}.
To show that the algorithm is correct we show that the algorithm maintains the following loop invariant: Walk $W_T$ is a proper walk on the cycle and, for each $1 \leq i, j \leq n$, $T_{i, j} \leq S_{i,j}$ for any valid schedule $S$.
In the first round of the algorithm, the invariant trivially holds; walk $W_T$ is proper by definition and does not wait, thus if $S$ exists, then for every $1 \leq i, j \leq n$, $T_{i, j} \leq S_{i,j}$.

Assume that the loop invariant holds at the beginning of an iteration and let there be some $v_i$ for which $T_{i, 1} > d_i$. Since, the loop invariant is true we know that $d_i < T{i, 1} \leq S_{i,j}$ for any valid schedule. This inequality is only possible if no valid schedule exists, as it directly contradicts the definition of a valid schedule. 
Thus, the algorithm is correct if it returns that there is no valid schedule.

Next, we show that after updating, $W_T$ is still a proper walk on the cycle. The schedule remains unchanged for all $i', j' \lhd i, j$, thus up to time $T_{i, j}$ the path remains proper. 
Then, updating $T_{i, j}$ is a safe operation as $r_i > T_{i, j}$, which means that $W_T$ will simply wait at vertex $v_i$ in the $j$th round. 
Furthermore, for all $i, j \lhd i', j'$ the last visit times are updated based on the minimum travel time from $v_i$ to $v_{i'}$ with $j' - j$ rounds in between. 
Thus, $W_T$ remains proper since either $T_{i', j'}$ is the minimum travel time from leaving $v_i$ at $r_i$ or a larger value. 
As soon as, $T_{i', j'} \geq r_i + d(v_i, v_{i'}) + C \cdot (j' - j)$, for all $i', j' \lhd i'', j''$, value $T_{i'', j''}$ remains unchanged because this part was a proper path and therefore must be greater than or equal to the minimum travel time from $v_i$.
Thus, after one update cycle, walk $W_T$ is still proper.

Second, let $T'$ be the schedule after $T$ has been updated, we need to show that after an iteration the loop invariant holds for schedule $T'$ assuming the invariant holds for $T$.
% the update for all $T'_{i, j} \leq S_{i, j}$ from any valid schedule $S$.
%I don't want to use i, j here, but don't know which others, maybe p, q? This should then also be used in the def of the LI
Trivially, the loop invariant holds for the schedule up to $T'_{i,j}$, since this part has not been modified.
Since, the loop invariant is true at the start of an iteration and $v_i$ is not collected, we know $d_i < T_{i, j+1} \leq S_{i, j+1}$.
So for any valid solution, vertex $v_i$ must be collected strictly before the $j + 1$th round, thus $S_{i, j} \geq r_i = T'_{i, j}$.
Then, for all $i, j \lhd i', j'$, where $r_i + d(v_i, v_{i'}) + C \cdot (j' - j) > T_{i', j'}$, since $S$ corresponds to a proper walk $W_S$, $S_{i', j'} \geq r_i + d(v_i, v_{i'}) + C \cdot (j' - j) = T'_{i', j'}$.
Finally, for all $i, j \lhd i'', j''$, where $r_i + d(v_i, v_{i''}) + C \cdot (j'' - j) \leq T_{i'', j''}$, it holds that $T_{i'', j''} = T'_{i'', j''}$ and hence $S_{i'', j''} \geq T'_{i'', j''}$.
%I just realized I a writing this for unit weight, but we probably want it for arbitrary weights, right? :(

Thus, by our loop invariant, termination occurs only if no valid schedule exists or a valid schedule is found. 
Due to the choice of $j$ in Equation \ref{eq:chooseJfullproof}, for every $1 \leq j \leq 4n$, the value of $T_{i, j}$ can be updated at most once in Equation~\ref{eq:upTfullproof}. Thus, after $\mathcal{O}(n^2)$ iterations, the algorithm must either report a valid schedule or that no such schedule exists. 
Furthermore, each iteration takes at most $\mathcal{O}(n^2)$ steps. Thus, a valid schedule will be computed in $\mathcal{O}(n^4)$ time if it exists.
\end{proof}

Thus, COP-1TW can be solved in polynomial time.
%We now have given a polynomial-time algorithm for COP-1TW. \todo{do we need to say this sentence?}
In Section \ref{sec:DCdtw}, we will see that the version with multiple time windows, COP-MTW and thus OP-MTW is NP-hard. It remains open for future work, whether OP-1TW NP-hard. However, in this section a 2-approximation algorithm is given using the following theorem: 

\begin{theorem}\label{thm:cycSingle}
Given an instance of the OP-1TW on a directed cycle $G$ with arbitrary profits and edge weights, and where every time window is shorter than the total length of the cycle, the OP-1TW can be solved in $\mathcal{O}(n^2 \log n)$ time.
%Given a directed cycle $G$ with arbitrary edge costs, arbitrary vertex profits, and where every time window is shorter than the total length of the cycle, the OP-1TW can be solved in $\mathcal{O}(n^2 \log n)$ time.
%Given an instance of the Orienteering problem with single time windows on a directed cycle $G$ with $n$ vertices and where every time window is shorter than the total length of the cycle, the OP-1TW can be solved in $\mathcal{O}(n^2 \log n)$ time.
\end{theorem}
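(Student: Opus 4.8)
The plan is to reduce the problem on the directed cycle to the already-solved problem on a directed path (Proposition~\ref{prop:dirOP}) by guessing enough structure about the optimal walk. On a directed cycle a proper walk is monotone in time and never reverses direction, so it is determined by its start vertex, its start time, and its end time; the subset of vertices it collects is exactly those vertices $v_i$ whose time window $[r_i,d_i]$ contains a time at which the walk passes through $v_i$. The key observation I would exploit is the hypothesis that every time window is shorter than the cycle length $C$: this means that for a fixed start vertex $s$ and a fixed ``anchor'' passage of some vertex, each vertex can contribute its profit in at most one of the laps the walk makes around the cycle, so the ambiguity of ``which lap collects $v_i$'' essentially disappears. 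More precisely, once the walk is at vertex $v_i$ at time $t$, the window $[r_i,d_i]$ (being shorter than $C$) can overlap the walk's presence at $v_i$ during at most one lap, since consecutive passes of $v_i$ are exactly $C$ apart in time for a non-waiting walk and at least $C$ apart in general.

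The main steps I would carry out are as follows. First, I would argue that we may restrict attention to proper walks (monotone, one-directional, only waiting or advancing), exactly as in the proof of Theorem~\ref{thm:cycTSP}: any feasible walk on a directed cycle can be turned into a proper one of no greater cost collecting at least as much profit. Second, I would guess the start vertex $s$: there are $n$ choices, which contributes the factor $n$ to the running time. Third, for a fixed $s$, I would ``cut'' the cycle at $s$ and unroll it into a directed path. Because the budget $B$ is finite and each lap costs $C$, the walk makes $O(B/C)$ laps; but a cleaner bound comes from observing that the number of distinct vertices is $n$, so a proper walk that ever makes more than $n$ full laps without collecting new profit is wasteful — hence we may assume $O(n)$ laps, giving an unrolled path with $O(n^2)$ vertices. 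On this unrolled path I would place, for each copy of $v_i$, the translated time window, and then invoke Proposition~\ref{prop:dirOP}. The subtlety is that on the unrolled path a vertex appears many times, once per lap, and the orienteering solver on the path would be happy to collect $\pi(v_i)$ once per copy; here is where the short-window hypothesis is used to show that in fact the translated windows of the copies of $v_i$ are pairwise disjoint along the time axis of any proper walk, so no proper walk on the unrolled path can collect $v_i$ twice anyway, and the path solver returns a genuinely feasible cycle solution.

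The hard part, and the step I expect to need the most care, is precisely this reduction of the ``collect once'' constraint: on the unrolled path we must be certain that the path-orienteering algorithm never double-counts a vertex that appears in multiple laps. I would handle this by showing that for a proper walk, the time at which it can be at the $k$-th copy of $v_i$ and the time at which it can be at the $(k+1)$-th copy differ by at least $C$, while the window has length less than $C$; therefore at most one copy's window can be ``active'' during the walk's visit, so collecting $v_i$ more than once is impossible for any proper walk regardless of the solver's choices. Combined with the $n$ choices of start vertex and the $\mathcal{O}(m\log m)$ bound from Proposition~\ref{prop:dirOP} applied to an instance with $m = \mathcal{O}(n^2)$ time windows — wait, that would give $\mathcal{O}(n \cdot n^2 \log n)$; to reach the claimed $\mathcal{O}(n^2\log n)$ I would instead guess not the start vertex but the ``first collected vertex'' and argue the walk needs only $\mathcal{O}(1)$ relevant laps after its first collection (again using short windows: once a vertex is missed in its lap it can never be collected later, so the useful part of the walk spans $O(1)$ laps), shrinking the unrolled path to $\mathcal{O}(n)$ vertices and the total to $\mathcal{O}(n \cdot n\log n) = \mathcal{O}(n^2\log n)$. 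Finally I would take the best solution over all guesses, which is optimal since the optimal walk's first-collected vertex is one of the guesses and the reduction loses nothing.
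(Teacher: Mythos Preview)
Your core idea matches the paper's proof: unroll the directed cycle into a directed path of $O(n)$ laps and invoke Proposition~\ref{prop:dirOP}, using the short-window hypothesis exactly as you describe to show that no two copies of the same vertex can both be collected. That part is correct and is precisely what the paper does.

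The gap is in your running-time analysis. First, the start vertex $s$ is part of the input in the Orienteering Problem, so there is nothing to guess; your factor of $n$ for ``guessing $s$'' is spurious. Removing it, your $O(n)$-lap unrolling already gives a path with $O(n^2)$ time windows and hence $\mathcal{O}(n^2\log n)$ via Proposition~\ref{prop:dirOP}, which is the claimed bound. (The paper obtains the $O(n)$-lap bound cleanly by first invoking Lemma~\ref{lem:dmax2n} to reduce to $d_{\max}\le 4nC$, whence at most $4n$ laps; your informal ``more than $n$ laps without new profit is wasteful'' heads in the right direction but is not a proof as stated, since a walk may legitimately need to wait through many laps before a late window opens.)

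Second, and more seriously, your attempted salvage --- guessing the first collected vertex and asserting that ``the useful part of the walk spans $O(1)$ laps'' because ``once a vertex is missed in its lap it can never be collected later'' --- is simply false. Short windows prevent collecting the \emph{same} vertex twice; they do not force all windows to cluster within a constant number of laps. For instance, give vertex $v_i$ the window $[iC, iC + 1]$ for $i=1,\dots,n$ (all of length $1<C$): an optimal walk must make $\Theta(n)$ laps. So drop the $O(1)$-lap claim entirely; once you stop guessing $s$ and justify the $O(n)$-lap bound (e.g.\ via Lemma~\ref{lem:dmax2n}), you are done.
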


\begin{proof}
    %Add 
    Let $C$ be the cost of traversing cycle $G$ exactly once. W.l.o.g.~assume that $d_{max} \leq 4nC$ (Lemma \ref{lem:dmax2n}).
    To solve the \OP on a directed cycle with `short' time windows the cycle can be unwrapped and concatenated to itself $4n$ times, resulting in a directed path $G'$. On this directed path the result of Lemma \ref{prop:dirOP} can be applied.
    
    More formally, for $1 \leq j \leq 4n$ and for each $1 \leq i \leq n$ create a vertex $v_{i, j}$ and add this to $V(G')$ with profit $\pi(v_i)$ and time window $[r_i, d_i]$, where $v_i \in V(G)$. Furthermore, for $1 \leq j \leq 4n$ and $1 \leq i < n$, add an edge $(v_{i, j}, v_{i+1, j})$ to $E(G')$ with the same cost as $(v_i, v_{i+1}) \in E(G)$. Finally, for $1 \leq j < 4n$, add an edge $(v_{n, j}, v_{1, j+1})$ to $E(G')$ with the cost of $(v_n, v_1) \in E(G)$.
    Let $\OPT$ be the profit of an optimal walk $W$ in $G$ and $\OPT'$ be the profit of an optimal walk $W'$ in $G'$. Then it, always holds that $\OPT = \OPT'$.
    
    Since all time windows are shorter than $C$, walk $W'$ could not have collected two distinct vertices $v_{i, j}$ and $v_{i, j'}$ for some $1 \leq i \leq n$ and $1 \leq j, j' \leq n$. Therefore, any walk in $G$ can be modified to collect the same profit in $G'$ and vice versa. 
    Hence, Lemma \ref{prop:dirOP} can be used to solve $G'$, which can directly be used as a solution for $G$. Since, $G'$ has $4n^2$ intervals, the solution to the OP-1TW on cycles with no time windows longer than or equal to $C$ can be calculated in $\mathcal{O}(n^2 \log n)$.
\end{proof}

% Concluding with the following theorem
The problem why a dynamic programming algorithm like the one in Theorem~\ref{thm:cycSingle} does not extend to `long' intervals is that when the algorithm encounters such an interval, it does not know whether it already collected the profit in a previous iteration. If all but $k$ intervals are `short', the algorithm could of course, at the cost of a factor of $2^k$ in the running time, keep track on the profit depending on which subset of the $k$ `long' intervals it already collected. Alternatively, we can consider `short' and `long' intervals separately, resulting in a 2-approximation.

\begin{corollary}\label{cor:cycleFPT}
    Given an instance of the OP-1TW on a directed cycle with arbitrary profits and edge weights, the OP-1TW is fixed-parameter tractable in the number of time windows that are at least as long as the length of the cycle.
    % Let $G$ be a directed cycle with arbitrary profits and edge weights. OP-1TW is fixed-parameter tractable in the number of time windows that are at least as long as the length of the cycle.
\end{corollary}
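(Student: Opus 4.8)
The plan is to combine the exact algorithm of Theorem~\ref{thm:cycSingle} for short time windows with an explicit enumeration over the behaviour of the walk on the $k$ long time windows, where we call a time window \emph{long} if its length is at least $C$, the length of the cycle, and \emph{short} otherwise. Let $L = \{v_{i_1}, \dots, v_{i_k}\}$ be the set of vertices whose time window is long, so $k$ is the parameter.

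First I would observe that a long time window is qualitatively easy to serve: if a walk is present on the cycle for a contiguous time span of length at least $C$ contained in $[r_i, d_i]$, it can visit $v_i$ during that span without any detour, simply because traversing the whole cycle once reaches every vertex. Hence, for a long-window vertex $v_i$, the only thing that matters is \emph{whether} a proper walk that respects the budget is moving (not stuck waiting at a vertex) throughout some length-$C$ subinterval of $[r_i,d_i]$; if so, its profit can be harvested essentially for free. So the plan is: guess the subset $Q \subseteq L$ of long-window vertices that the optimal walk actually collects, and guess, for each $v_i \in Q$, a ``visiting slot'' — a time interval of length $C$ inside $[r_i, d_i]$ during which the walk will be forced to keep moving. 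Since there are $2^k$ subsets $Q$, and for each we need to pin down $|Q| \le k$ slots, we want to argue that only polynomially many choices of slots per subset need to be tried: it suffices to take each slot to \emph{start} at a release time of some (long or short) window, or at $r_i$ itself, giving $O(n)$ candidate start times per long-window vertex and hence $n^{O(k)}$ total configurations — still FPT in $k$ once we treat the polynomial factor as allowed, or, with a little more care, $2^k \cdot \mathrm{poly}(n)$ if we are cleverer about the slot placement. (The cleanest version: sort the chosen slots by start time; between consecutive slots and before/after them the walk behaves on the short windows exactly as in Theorem~\ref{thm:cycSingle}, so the slot endpoints can be snapped to event times of the combined instance without loss.)

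Given a fixed configuration $(Q, \{[\sigma_i, \sigma_i + C]\}_{v_i \in Q})$, I would then run the dynamic program behind Theorem~\ref{thm:cycSingle} (equivalently, Proposition~\ref{prop:dirOP} on the unwrapped path $G'$ consisting of $4n$ copies of the cycle, using the bound $d_{max} \le 4nC$ from Lemma~\ref{lem:dmax2n}) on the instance restricted to the short-window vertices, but with two modifications. (1) Add $\sum_{v_i \in Q}\pi(v_i)$ to the objective as a fixed bonus. (2) Enforce the slot constraints: during each interval $[\sigma_i, \sigma_i + C]$ the walk must make exactly one full loop of the cycle and may not pause — this is just a forbidden-waiting constraint on a contiguous stretch of the path $G'$, which the dynamic program can handle by refusing the ``wait at $v_i$'' transitions inside those time ranges, or equivalently by pre-fixing the walk's position as a function of time on those stretches and only running the DP on the complementary time intervals and stitching the pieces. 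Since the short windows are all of length $< C$, the argument from Theorem~\ref{thm:cycSingle} that no short-window vertex is double-counted across cycle-copies still applies verbatim. Taking the maximum over all $n^{O(k)}$ (resp.\ $2^k\,\mathrm{poly}(n)$) configurations yields the optimum, because an optimal walk induces some such configuration and the DP value for that configuration is at least its profit, while every configuration's DP value corresponds to a genuine feasible walk.

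The main obstacle I expect is the bookkeeping in step~(2): making precise the claim that ``forcing a non-stop loop during $[\sigma_i,\sigma_i+C]$'' both (a) is without loss of generality for serving $v_i$ — one must check that if the optimal walk collects $v_i$ in its long window then it can be \emph{modified}, without losing any other profit and without exceeding the budget, so that it performs an uninterrupted loop during some length-$C$ subwindow; the delicate point is that inserting waiting/loops to realise a prescribed slot must not push later short-window visits past their (short) deadlines, so the slots must be chosen greedily/latest-possible, mirroring the ``$T_{i,j} \le S_{i,j}$'' invariant idea from Theorem~\ref{thm:cycTSP} — and (b) composes correctly with the short-window DP across slot boundaries, i.e.\ the DP state (current vertex, current time, collected short-window profit) at the start of a slot determines, together with the forced loop, the state at its end, so the pieces concatenate into one valid walk. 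Getting the slot-placement lemma right — essentially, that it suffices to try slots whose start times lie in the finite event set of the instance — is where the real work lies; everything else is a direct reuse of Proposition~\ref{prop:dirOP}/Theorem~\ref{thm:cycSingle} and Lemma~\ref{lem:dmax2n}.
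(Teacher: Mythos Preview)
Your approach works at the level of an XP algorithm but, as written, does \emph{not} establish fixed-parameter tractability. You explicitly arrive at $n^{\mathcal{O}(k)}$ configurations (one slot per long-window vertex, each with $\mathcal{O}(n)$ candidate start times) and call this ``still FPT in $k$ once we treat the polynomial factor as allowed''; this is the definition of XP, not FPT. Your parenthetical promise of a $2^k\cdot\mathrm{poly}(n)$ refinement via cleverer slot placement is exactly the part that would need to be proved, and you leave it as an assertion. On top of that, the two lemmas you flag yourself --- that an optimal walk can always be normalised to perform a non-stop length-$C$ loop inside each served long window without losing short-window profit, and that slot start times may be snapped to a finite event set --- are genuinely delicate (the first can shift later short-window visits past their deadlines), and neither follows from anything already in the paper.

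The paper's argument is far simpler and sidesteps all of this. The only reason the unwrapped-path dynamic program of Theorem~\ref{thm:cycSingle} breaks for long windows is that the same long-window vertex may be encountered in several copies and the DP cannot tell whether its profit was already taken. So one just augments the DP state with the subset $S\subseteq L$ of long-window vertices collected so far: maintain $2^k$ parallel copies of the range-tree DP from Proposition~\ref{prop:dirOP}, and when a copy of a long-window vertex $v_\ell$ is processed, add $\pi(v_\ell)$ only on the transition from states with $\ell\notin S$ to states with $\ell\in S$. This is a one-line modification giving running time $2^k\cdot \mathcal{O}(n^2\log n)$, with no slot guessing, no normalisation lemma, and no snapping argument. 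Your route could perhaps be completed, but it re-derives, with considerably more machinery and an unproven structural lemma, what the bitmask augmentation gives for free.
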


\begin{corollary}\label{cor:cycleApprox}
    Given an instance of the OP-1TW on a directed cycle $G$ with arbitrary profits and edge weights, there is a 2-approximation algorithm that runs in $\mathcal{O}(n^2\log n)$ time.

    %Given an instance of the OP-1TW on a directed cycle $G$ with arbitrary profits and edge weights, a 2-approximation for the OP-1TW can be found in $\mathcal{O}(n^2 \log n)$ time.
    % Let $G$ be a directed cycle with arbitrary profits and edge weights. A 2-approximation for the OP-1TW can be found in $\mathcal{O}(n^2 \log n)$, where $n = |V(G)|$.
\end{corollary}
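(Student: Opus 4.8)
The plan is to split the time windows of the given instance into two groups: the \emph{short} ones, which have length strictly less than the cycle length $C$, and the \emph{long} ones, which have length at least $C$. Write $\Pi_{\text{short}}$ for the total profit of an optimal walk that is only allowed to collect a vertex via a short window, and $\Pi_{\text{long}}$ for the total profit of an optimal walk that is only allowed to collect via a long window; in both cases the walk still operates under the original budget $B$. First I would observe that for any optimal walk $W$ of the original instance, every collected vertex is collected through either a short or a long window, so its profit is counted in at least one of the two restricted instances; hence $\OPT \le \Pi_{\text{short}} + \Pi_{\text{long}}$, and therefore $\max(\Pi_{\text{short}}, \Pi_{\text{long}}) \ge \OPT/2$.

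Next I would compute $\Pi_{\text{short}}$ exactly by invoking Theorem~\ref{thm:cycSingle}: after deleting all long windows (equivalently, setting the profit of a vertex whose window is long to contribute nothing), every remaining time window is shorter than $C$, so the $\mathcal{O}(n^2 \log n)$-time algorithm applies. For $\Pi_{\text{long}}$ I would use the following simple structural fact: a window of length at least $C$ contains, for every vertex on the cycle, at least one full traversal opportunity, so once the walk reaches the release time of such a window it can collect the corresponding vertex by simply continuing around the cycle without any timing constraint beyond arriving before the deadline. Concretely, a long window $[r_i, d_i]$ with $d_i - r_i \ge C$ can be collected by \emph{any} walk that is located somewhere on the cycle at some time in $[r_i, d_i - C]$ and thereafter keeps moving; since we may always assume the walk starts moving at time $0$ and $d_{max} = B$, the only real constraint is $B \ge r_i$ together with having enough remaining budget to traverse far enough — and because after time $r_i$ the walk can reach vertex $v_i$ within one more traversal, collecting a long window costs at most $C$ extra budget on top of reaching time $r_i$. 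This reduces the ``long'' subproblem to choosing, greedily or by a one-dimensional dynamic program over the $\mathcal{O}(n)$ sorted release times, which long windows to pick subject to the budget; this can be done well within $\mathcal{O}(n^2 \log n)$ time. Finally I would return the better of the two solutions, which by the argument above has profit at least $\OPT/2$, and note that both computations, plus the constant-factor comparison, run in $\mathcal{O}(n^2 \log n)$ total.

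The main obstacle I anticipate is making the treatment of the long windows fully rigorous: one has to argue carefully that the feasibility of collecting a set of long-window vertices really does decouple from the fine-grained timing (so that it reduces to a clean budget-constrained selection problem), handling the interaction between the budget $B$, the starting vertex $s$, and the possibility that a walk may want to wait rather than move. A clean way around subtle cases is to again reduce to a directed path as in Theorem~\ref{thm:cycSingle} — unwrap the cycle $4n$ times — and simply run the OP-MTW path algorithm of Proposition~\ref{prop:dirOP} on the long-window-only instance; since on the unwrapped path a vertex may legitimately be collected in several copies this is over-counting, but for windows of length $\ge C$ the over-counting is bounded, and in fact restricting to short windows and to long windows separately, then taking the maximum, still yields the factor $2$. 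I would present whichever of these two arguments is cleanest, but I expect the explicit structural argument for long windows, combined with Theorem~\ref{thm:cycSingle} for the short ones, to give the tightest and most transparent proof.
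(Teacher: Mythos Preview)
Your overall strategy---partition the vertices into those with time windows shorter than the cycle length $C$ and those with windows of length at least $C$, solve each sub-instance optimally, and return the better of the two---is exactly the paper's approach, and your handling of the short-window instance via Theorem~\ref{thm:cycSingle} matches the paper.

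Where you diverge is in the long-window instance, and there you are making life much harder than necessary. You correctly observe that a long window $[r_i,d_i]$ with $d_i-r_i\ge C$ is hit by any walk that is moving without waiting throughout some length-$C$ subinterval of $[r_i,d_i]$. But then you stop short of the immediate consequence: since $d_{max}=B$ by assumption, the single walk that starts at $s$ at time $0$ and keeps moving, never waiting, until time $d_{max}$ traverses the entire cycle during \emph{every} long window and therefore collects \emph{every} long-window vertex. Hence $\Pi_{\text{long}}$ is simply the sum of all long-window profits, computable in $\mathcal{O}(n)$ time; there is no selection problem, no greedy, no DP, and no need to unwrap the cycle a second time. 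Your proposed alternative of unwrapping and applying Proposition~\ref{prop:dirOP} to the long-window instance would in fact over-count (as you note) and does not directly give $\Pi_{\text{long}}$. Replace that paragraph with the one-line observation above and the proof is complete and identical to the paper's.
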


\begin{proof}
    Let $C$ be the total cycle length of $G$.
    Furthermore, let $G_s$ and $G_l$ be exact copies of $G$.
    We will modify $G_s$ such that it only has `short' time windows.
    %and $G_l$ such that it has only `long' intervals.
    More specifically, for $1 \leq i \leq n$ if $d_i - r_i \geq C$, in $G_s$ set $r_i = d_i = 0$ and $\pi(v_i) = 0$. 
    Furthermore, we will modify $G_l$ such that it only has `long' time windows.
    Analogously, for $1 \leq i \leq n$ if $d_i - r_i < C$, in $G_l$ set $r_i = d_i = 0$ and $\pi(v_i) = 0$. 

    Using Lemma \ref{thm:cycSingle}, we can solve the instance corresponding to $G_s$ obtaining profit $\OPT_s$.
    Furthermore, since all vertices in $G_l$ with profit have time windows longer than $C$, a walk which continues until $d_{max}$ collects all available profit $\OPT_l$.

    Now let $\OPT$ be the profit of an optimal walk in $G$, then clearly $\OPT \leq \OPT_s + \OPT_l \leq 2 \max(\OPT_s, \OPT_l)$. Hence, $\max(\OPT_s, \OPT_l)$ is a 2-approxi\-mation that can be computed in $\mathcal{O}(n^2 \log n)$ time.
\end{proof}

In the following we develop a polynomial-time approximation scheme for the OP-1TW. 
%Alternatively, the OP-1TW can be broken up into subproblems. Specifically, 
For this, we first consider (sub-)problems that can be optimally solved by a walk taking at most $k$ rounds. A \emph{round} is a walk starting and ending at $v_0$ visiting every other vertex once. 
The following lemma makes use generalization of the dynamic program as used in Proposition~\ref{prop:dirOP} for directed paths.
%By suitably combining these subproblems, we later obtain a polynomial-time approximation scheme.

\begin{lemma}\label{lem:k-rounds}
    Given an instance of the OP-1TW on a directed cycle with arbitrary profits and edge weights where a walk can take at most $k$ rounds, the OP-1TW can be solved in $n^{\mathcal{O}(k)}$ time.
\end{lemma}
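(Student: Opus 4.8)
The plan is to generalize the dynamic program of Proposition~\ref{prop:dirOP}. Since a walk takes at most $k$ rounds, we can \emph{unroll} the cycle $k$ times into a directed path $G'$ with vertices $v_{i,j}$ for $1\le i\le n$, $1\le j\le k$, exactly as in the proof of Theorem~\ref{thm:cycSingle}, but with one crucial difference: a profit $\pi(v_i)$ may now legitimately be collectible from several copies $v_{i,1},\dots,v_{i,k}$ of the same vertex (its time window may be long), yet it must be counted at most once. So the plain dynamic program over $G'$ does not directly work; the state must record which of the (at most $k$) copies of each vertex have already contributed their profit.

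First I would set up the state of the dynamic program. Process the vertices of $G'$ in path order, i.e.\ in the order $v_{1,1},v_{2,1},\dots,v_{n,1},v_{1,2},\dots,v_{n,k}$. When the walk is at vertex $v_{i,j}$, the only information from the past that affects which future profits are still available is: the current time $t$, and, for each vertex index $i'$, whether $\pi(v_{i'})$ has already been collected. But observe that once we have passed all copies $v_{i',1},\dots,v_{i',k}$ of vertex $i'$ (which happens after position $(k-1)n+i'$ in the path order), whether we collected $\pi(v_{i'})$ no longer matters for the future. So at the moment we are at $v_{i,j}$, the only "open" vertices whose collection-status matters are those $i'$ with $i' \ge i$ in the rounds $\le j$ — more precisely, for each index $i'$ we only care how many of the first $j$ (or $j-1$) copies remain to be seen. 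Encoding this, the number of distinct vertices whose status is still relevant is $O(n)$, but the status is one bit each, which would be $2^{\Theta(n)}$ — too much. The fix is the standard one: we do not need the full bit-vector, only the \emph{time}, analogously to Proposition~\ref{prop:dirOP}. Concretely, I would define, for each choice of which round index $j_{i'}\in\{1,\dots,k,\infty\}$ is the round in which $\pi(v_{i'})$ gets collected (with $\infty$ meaning never), a value that is the minimum time at which a proper walk consistent with these choices reaches $v_{i,j}$. Since there are $n$ vertices and $k+1$ choices each, this is $(k+1)^n$ — still too much.

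The right trade-off, and the reason the bound is $n^{O(k)}$ rather than $(k{+}1)^{O(n)}$, is to index the DP by position along the path and, at position $v_{i,j}$, only track the collection-choices of the vertices whose copies straddle the current position, namely those $i'$ with $i' < i$ in round $j$ together with those $i'\ge i$ in round $j{-}1$ — a window of exactly $n$ vertices, but of which we really only need to know, for the future, the \emph{count and identity} constrained by monotonicity of the walk. I would argue that a proper walk visits the copies in path order, so the set of copies from which profit is actually taken, restricted to any single vertex index $i'$, is a single copy; and the relevant sufficient statistic when standing at $v_{i,j}$ is the tuple recording, for each of the $n$ vertex indices, \emph{which round's copy} (among the $\le j$ seen so far) supplied its profit. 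Because the walk is monotone and can take at most $k$ rounds, one shows by an exchange argument that it suffices to guess the $O(k)$ "turning data" — for each of the $k$ rounds, the first and last vertex at which profit is collected in that round — and between consecutive guesses run the directed-path DP of Proposition~\ref{prop:dirOP} on the relevant arc, subtracting profits already accounted for. There are $n^{O(k)}$ choices of turning data, and for each the path-DP runs in time polynomial in $nk$, giving the claimed $n^{O(k)}$ total.

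The main obstacle, and where I would spend the most care, is precisely the double-counting: making the argument that a single sufficient statistic of size $n^{O(k)}$ (rather than $2^{\Theta(n)}$) captures everything the future needs. The clean way is the exchange/structure lemma: in an optimal $k$-round walk, for each vertex $v_i$ fix the round $\rho(i)\in\{1,\dots,k\}$ in which its profit is collected (if any); then the walk's behaviour in each round is that of an optimal directed-path traversal that is forbidden to re-collect any $v_i$ with $\rho(i)$ in an earlier round, and these constraints interact only through the endpoints of each round's "active segment." Once that structural statement is pinned down, the algorithm is: enumerate, for each round $r\le k$, the pair (first collected index, last collected index) in round $r$ — this is the $n^{O(k)}$ enumeration — then for each fixed enumeration run $k$ instances of the Proposition~\ref{prop:dirOP} DP stitched together, with profit of $v_i$ zeroed out in all rounds except the unique round whose active segment is the earliest one containing $i$. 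Feasibility (time budget $B$, time windows) is automatically handled by the path-DP since $G'$ inherits the time windows. Taking the maximum over all enumerations yields the optimum, in $n^{O(k)}$ time.
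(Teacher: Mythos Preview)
Your final algorithm has a genuine gap. The rule ``zero out $\pi(v_i)$ in all rounds except the earliest round whose active segment contains $i$'' is unsound: a vertex may lie inside the active segment of an early round yet have its time window hit only in a later round. Concretely, take a $3$-cycle with unit edges, $v_1$ with window $[0,0]$, $v_2$ with window $[4,4]$, $v_3$ with window $[2,2]$, unit profits, and $k=2$. The optimum collects $v_1$ at time $0$, $v_3$ at time $2$, and $v_2$ at time $4$ in round~$2$, for profit $3$. In the intended enumeration (round~$1$ active segment $[1,3]$, round~$2$ active segment $[2,2]$) your rule forces $v_2$'s profit into round~$1$, where it cannot be collected; one checks that every other enumeration also drops a vertex, so your algorithm returns $2$. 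The structural claim that ``constraints interact only through the endpoints of each round's active segment'' is simply false: the assignment $\rho(i)$ cannot be recovered from $2k$ indices, and no exchange argument will produce it.

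The paper avoids double-counting by a different organisation of the DP. Instead of walking along the unrolled path round after round, it processes the $n$ vertices of the cycle once and tracks all $k$ rounds in parallel: after guessing the round start times $s_1,\dots,s_k$, the state at vertex $v_j$ is the tuple $(t_1,\dots,t_k)$ of times at which the walk leaves $v_j$ in each round. Since an optimal walk only waits until some release time, each $t_i$ ranges over $O(n)$ values, giving $n^{O(k)}$ states. Because the DP sees all $k$ visit times to $v_j$ at once, it adds $\pi(v_j)$ exactly once if any of them hits the window. In short, the sufficient statistic of size $n^{O(k)}$ you were searching for is the vector of leave-times across rounds at the current cycle position, not the per-round endpoint pairs.
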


\begin{proof}
    In the following we assume that we know the starting time $s_i$ of every round $i$ in the optimal walk (which can by guessed by trying all possible starting times with an additional factor of ${\mathcal{O}(n^k)}$ time). An analogous dynamic program as used in Proposition \ref{prop:dirOP} can be used, except that this dynamic program has $k+1$ parameters $j, t_1, \cdots, t_k$. An entry in the dynamic program will contain the maximum profit that can be collected from the first $j$ vertices $v_1, \cdots, v_j$, 
    assuming the path starts round $i$ at time $s_i$ and leaves $v_j$ at time $t_i$.

    Then, for $j+1$ and a set of $t_i$, we can compute the corresponding entry based on the solutions for $j$. Since, the edge weights can be arbitrary, the set of $t_i$ must contain all the points at which an optimal path can leave $v_{j+1}$.
    We can assume an optimal path only waits at a vertex until its profit will be collected (with an argument similar to that used in Theorem \ref{thm:cycTSP}). 
    Then, time $t_i$ either must be $s_i + d(v_1, v_j)$ or $r_k + d(v_k, v_j)$ for all $1 \leq k \leq j$, where $d(v_k, v_j)$ gives the minimum time necessary to travel from $v_k$ to $v_j$. 

    To conclude, the dynamic program has $n^{\mathcal{O}(k)}$ possible entries, and each entry can be computed in $\mathcal{O}(nk)$ time based on previous entries. Hence, the algorithm runs in $n^{\mathcal{O}(k)}$ time.
\end{proof}

%%kevin: alternative proof sketch
%\begin{proof}[Sketch]
%We guess the starting time of each round, for which there are ${\mathcal{O}(n^k)}$ options. This problem can now be solved by a generalization of the dynamic program as used in Proposition~\ref{prop:dirOP} for directed paths.
%\end{proof}

% Kevin: sorry, I added some text here. But we can shorten it again if necessary
Now suppose that we would compute for every pair of times $t < t'$ the maximum-profit walk that takes at most $k$ rounds. Unfortunately, concatenating such walks is not optimal, since they might collect the profit of the same vertices. To circumvent this problem we restrict to walks that occasionally perform a \emph{sprint}, that is, a round in which the walk does not wait at vertices. A sprint takes exactly $C$ time. 
We define such a walk as a \emph{$k$-sprint}: a sequence of at most $k$ rounds of which the last round is a sprint. Furthermore, we define a \emph{$k$-workout} as a walk consisting of a sequence of $k$-sprints, meaning that every $k$ rounds there is at least one sprint.

\begin{lemma}\label{lem:dc-ptas2}
    Given an instance of the OP-1TW on a directed cycle with arbitrary profits and edge weights, the optimal $k$-workout can be computed in $n^{\mathcal{O}(k)}$ time.
\end{lemma}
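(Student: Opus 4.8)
The plan is to read the definition of a $k$-workout literally and turn it into a dynamic program over its constituent $k$-sprints. First I would bound their number: by Lemma~\ref{lem:dmax2n} we may assume $d_{max}\le 4nC$, and every $k$-sprint ends with a sprint, which traverses the whole cycle without waiting and hence takes time exactly $C$; so any $k$-workout consists of only $\mathcal{O}(n)$ $k$-sprints (possibly followed by a tail of fewer than $k$ ordinary rounds). These $\mathcal{O}(n)$ $k$-sprints will be the stages of the dynamic program.

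The dynamic program scans the $k$-sprints from left to right. A state records the index $i$ of the current $k$-sprint together with a \emph{canonical description} of it: the at most $k$ times at which it departs $v_0$ to begin each of its rounds, restricted to the polynomially many canonical values that arise from the waiting-time normalisation already used in Proposition~\ref{prop:dirOP} and Lemma~\ref{lem:k-rounds} (the walk only ever waits at a vertex until some release time is reached). A local exchange argument shows it is enough to consider such descriptions, and, as in Lemma~\ref{lem:k-rounds}, there are $n^{\mathcal{O}(k)}$ of them; a description pins down, for every vertex, whether this $k$-sprint visits it inside its window. The one subtlety is that a vertex $v_j$ with a long window can be visited in-window by several consecutive $k$-sprints, and $\pi(v_j)$ must be counted once. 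Here I would prove the structural lemma that the indices of the $k$-sprints able to collect $v_j$ form a \emph{contiguous} range: the $k$-sprints tile the time axis into consecutive intervals, and any $k$-sprint whose interval lies entirely inside $[r_j,d_j]$ contains a full round and therefore visits $v_j$ in-window. Hence it suffices to charge $v_j$ to the \emph{first} $k$-sprint that collects it, and whether the current $k$-sprint $i$ is that one can be decided from the descriptions of $k$-sprints $i-1$ and $i$ alone: it is first iff it collects $v_j$ and either it is the earliest $k$-sprint whose interval meets $[r_j,d_j]$, or its predecessor's interval also meets $[r_j,d_j]$ but the predecessor does not collect $v_j$. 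The transition value from a predecessor description to a current description is then the total profit of the vertices collected in-window by the current $k$-sprint but not by its predecessor — a $\mathrm{poly}(n)$ computation once both descriptions are fixed (and, when we also optimise over the current $k$-sprint's trajectory, an inner run of the Lemma~\ref{lem:k-rounds} dynamic program with the predecessor's in-window vertices given zero profit). With $\mathcal{O}(n)$ stages, $n^{\mathcal{O}(k)}$ states per stage, and $n^{\mathcal{O}(k)}$ transitions of polynomial cost, the optimal $k$-workout is computed in $n^{\mathcal{O}(k)}$ time.

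The hard part is keeping the state polynomial, i.e.\ making the de-duplication interface between consecutive $k$-sprints small: a priori the predecessor's in-window visit set is an arbitrary subset of $V(G)$. The way around this is that the closing sprint of each $k$-sprint is completely rigid, so its contribution to that set is fixed by a single time parameter, and the contiguity lemma localises all remaining ambiguity to the vertices whose window \emph{opens} within the predecessor $k$-sprint; the technical heart is to show these can be summarised by polynomially many canonical configurations, so that the $n^{\mathcal{O}(k)}$ bound on the description inherited from Lemma~\ref{lem:k-rounds} really does suffice. A secondary point needing care is that the intra-$k$-sprint dynamic program runs on the cycle unrolled at most $k$ times, so one must check — exactly as for the ``collected once'' reasoning in Proposition~\ref{prop:dirOP} — that zeroing out the already-collected profits still leaves an instance that the Lemma~\ref{lem:k-rounds} routine solves optimally.
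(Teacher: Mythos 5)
Your overall architecture (split the workout into its constituent $k$-sprints, optimise each piece with the routine of Lemma~\ref{lem:k-rounds}, bound the number of pieces via Lemma~\ref{lem:dmax2n}, and exploit the fact that each time window is a single interval to control double collection across boundaries) is the same as the paper's, and your contiguity lemma is essentially the paper's key observation. However, there is a genuine gap at the de-duplication interface. Your DP state is the at most $k$ round start times of the current $k$-sprint, and you assert that such a description ``pins down, for every vertex, whether this $k$-sprint visits it inside its window.'' That is not true: within every non-closing round the walk may wait at different vertices, so the in-window visit set depends on the full trajectory, which you simultaneously leave to an inner optimisation (the run of Lemma~\ref{lem:k-rounds} with zeroed profits). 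Consequently the predecessor's in-window set, which your first-collector test requires, is not a function of the predecessor's state; you acknowledge this and defer ``the technical heart'' (summarising that set by polynomially many canonical configurations) without proving it. As written, the DP must either carry an exponential interface or risk counting a vertex again when the predecessor in fact collected it.

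The missing idea, which your own contiguity argument almost delivers, is that the interface never needs more than the boundary time $t$. If a vertex $v$ is collected both by some earlier $k$-sprint and by the current one starting at $t$, then, since its window is one interval and all of the predecessor's visits to $v$ occur no later than the visit made by the predecessor's closing sprint, the rigid sprint occupying $[t-C,t]$ already visits $v$ inside $[r_v,d_v]$. Hence it suffices to compute, for every pair of candidate times $t<t'$, the profit of an optimal $k$-sprint on $[t,t']$ while ignoring (a) vertices whose windows miss $[t,t']$ and (b) vertices that a sprint ending at $t$ would collect --- exactly the zero-profit trick you already invoke, but with an exclusion set determined by $t$ alone --- and then to take a maximum-weight path in the graph whose nodes are the candidate times and whose edge $(t,t')$ carries this profit. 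This is how the paper's proof proceeds; it eliminates the ``canonical configurations'' problem entirely, and with this replacement your argument closes.
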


\begin{proof}
    For every pair of time $t < t'$ we will compute the following: the optimal $k$-sprint starting at $t$ and ending at $t'$, but ignoring profits of vertices where (a) time windows don't intersect $[t, t']$ or (b) the profit would have been collected by a sprint ending at time $t$.

    This optimal $k$-sprint can be computed with the algorithm of Lemma~\ref{lem:k-rounds} by limiting the overall time to $[t, t']$ and the last round to a sprint (which can simply be achieved by first computing the profits collected by the sprint and removing those profits from the instance).

    We now consider a directed graph with edges $(t, t')$ and as edge weights the profit of the optimal $k$-sprint for $t, t'$ as computed above. Then the weight of the maximum-weight path is the profit of the optimal $k$-workout. For this, two conditions must be checked: Firstly, no profit is collected twice. Secondly, by not considering all profits when computing the profits of the $k$-sprints, no profits were missed.

    Assume some profit of vertex $v$ is collected more than once. Consider any but the first $k$-sprint during which this profit is collected. Since there is a previous $k$-sprint that collects the profit of $v$, the time window of $v$ spans back far enough so that it would have been collected by a sprint just before the $k$ sprint. However, according to (b) above, the profit would have been excluded from consideration. Hence, profits are collected at most once.
    
    Now consider an optimal $k$-workout. Split it into $k$-sprints. This now corresponds to a path in our graph, i.e., each of the $k$-sprints corresponds to an edge. We argue that the weight of an edge computes the profit of a $k$-sprint correctly. The profits that we ignored during the computation of the edge weight, could not have been by the $k$-sprint. In particular those of case (b) couldn't have been collected by that $k$-sprints, because they are already collected by a previous $k$-sprint.
\end{proof}

% Then, given two times $t < t'$ we can compute the maximum-profit \emph{$k$-sprint} between these times similarly to \Cref{lem:k-rounds}, and these can be optimally concatenated using dynamic programming. The overall result is not a maximum profit walk but by choosing $k$ suitably depending on $\eps$ we obtain a $(1+\eps)$-approximation

Using Lemma \ref{lem:dc-ptas2} we can then construct a PTAS for the OP-1TW on a directed cycle:

% \begin{lemma}\label{lem:dc-ptas1}
%     Given an instance of the OP-1TW on a directed cycle with arbitrary profits and edge weights, the optimal $2k$-workout is a $k/(k-1)$-approximation of an optimal walk.
% \end{lemma}

% \begin{proof}
    
% \end{proof}

%The idea of using sprints is that now the subproblems before and after a sprint can be solved independently of each other. But the problem of sprints is that they might be suboptimal. We add sprints frequently enough such that we can use the algorithm of Lemma~\ref{lem:k-rounds}, while keeping them infrequent enough to not loose to much profit. 
%Kevin: The following text could go to the appendix together with the Lemmas (not only the proofs) if necessary.

\begin{theorem}\label{thm:OPTWdircyc-ptas}
    Given an instance of the OP-1TW on a directed cycle $G$ with arbitrary profits and edge weights, there is a $(1 + \eps)$-approximation algorithm that runs in $n^{\mathcal{O}(1/\eps)}$ time.
\end{theorem}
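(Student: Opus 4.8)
The plan is to reduce the theorem to \Cref{lem:dc-ptas2} via a rounding argument: I will show that for $k = \lceil 1 + 1/\eps\rceil = \mathcal{O}(1/\eps)$ there is always a $k$-workout whose profit is at least $(1 - 1/k)\,\OPT \ge \frac{1}{1+\eps}\OPT$, and then simply output the optimal $k$-workout, which by \Cref{lem:dc-ptas2} can be computed in $n^{\mathcal{O}(k)} = n^{\mathcal{O}(1/\eps)}$ time. Since the optimal $k$-workout is at least as profitable as the one we exhibit, this gives the claimed $(1+\eps)$-approximation in the claimed running time.

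To construct the near-optimal $k$-workout I would start from an optimal walk $W^\ast$ and decompose it into its consecutive rounds $R_1, R_2, \dots, R_m$ (each starting and ending at $v_0$), plus possibly an incomplete final piece. Let $p_i$ be the total profit of the vertices that $W^\ast$ first collects during round $R_i$, so $\sum_i p_i$ is essentially $\OPT$. Choose the residue class $\sigma \in \{1,\dots,k\}$ minimising $\sum_{i \equiv \sigma \pmod{k}} p_i$; by averaging this quantity is at most $\OPT/k$. Now build a new walk that executes round $R_i$ exactly as $W^\ast$ whenever $i \not\equiv \sigma \pmod{k}$, and replaces $R_i$ by a sprint (a round that never waits) whenever $i \equiv \sigma \pmod{k}$. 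Because in every window of $k$ consecutive rounds exactly one index is $\equiv \sigma \pmod{k}$, this walk has a sprint in every such window, so it decomposes into a sequence of $k$-sprints, i.e.\ it is a $k$-workout.

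The two things to verify are feasibility and the profit guarantee. Feasibility holds because a sprint needs time exactly $C$ while any round of $W^\ast$ needs time at least $C$; hence, by induction over rounds, the modified walk is back at $v_0$ at the start of each non-sprinted round no later than $W^\ast$ was, so it can wait until $W^\ast$'s start time for that round and then reproduce all of $W^\ast$'s moves on it with identical arrival times. Consequently, for each non-sprinted round $R_i$ the modified walk visits the same vertices at the same times as $W^\ast$, and since it has so far collected only a subset of $W^\ast$'s profit, it in particular collects every vertex that $W^\ast$ first collects during $R_i$. Summing over all $i \not\equiv \sigma \pmod{k}$ yields profit at least $\sum_{i\not\equiv\sigma} p_i \ge (1 - 1/k)\OPT$.

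I expect the main obstacle to be the bookkeeping at the end of the walk: making the construction decompose into valid $k$-sprints (whose last round must be a sprint) while respecting the budget, given that $W^\ast$ may end with an incomplete round and that up to $k-1$ rounds may trail the last sprint. The clean way to handle this is a dichotomy: if $W^\ast$ uses essentially all of its budget then (since $d_{max} = B$) it wastes almost no time and is already, up to turning a bounded number of near-sprint rounds into sprints, a $k$-workout; otherwise there is $\Omega(C)$ of slack, which suffices to append a trailing sprint and to absorb the incomplete final round without exceeding the budget, at no loss in profit. Once this case analysis is carried out, the theorem follows by combining the averaging bound with the feasibility observation and the $n^{\mathcal{O}(k)}$ running time of \Cref{lem:dc-ptas2}.
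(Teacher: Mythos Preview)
Your approach is essentially the paper's: both start from an optimal walk, use an averaging argument to select a sparse set of rounds to replace by sprints, and then invoke \Cref{lem:dc-ptas2}. You pick a residue class modulo $k$ (yielding a $k$-workout), whereas the paper splits into consecutive blocks of $k$ rounds and sprints the least profitable round in each block (yielding only a $2k$-workout); both give the same $n^{\mathcal{O}(1/\eps)}$ running time, and your feasibility and profit arguments are correct and in fact more explicit than the paper's.

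The one genuine weak spot is your end-of-walk dichotomy. The implication ``$W^\ast$ uses essentially all of its budget $\Rightarrow$ it wastes almost no time'' is false: a walk can exhaust its full budget while waiting arbitrarily long at vertices (e.g., wait at $v_0$ until time $B-C$ and then perform a single round), so this case does not force $W^\ast$ to be close to a $k$-workout. The paper, incidentally, does not address the trailing-round issue at all, so you are not missing anything they supply; but your proposed fix is incorrect as stated. A clean repair is either to extend \Cref{lem:dc-ptas2} to allow a final tail of at most $k$ rounds not ending in a sprint (one additional call to \Cref{lem:k-rounds} per candidate tail start time), or to observe that since $d_{\max}=B$ no profit is collectable after time $B$, so one may harmlessly pad the modified walk by completing the partial last round and appending one extra sprint, obtaining a $(k{+}1)$-workout with the same profit.
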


\begin{proof}
    Consider an optimal walk $W$ collecting $\OPT$ profit and split $W$ into sections of $k$ rounds. In each of these sections, replace the least profitable round by a sprint plus waiting time, giving us walk $W'$. Then, the maximum number of rounds between two sprints in $W'$ is $2k-2$, which means that $W'$ is a $2k$-workout. Furthermore, in the worst case, the profit is reduced at most by a factor of $(k-1)/k$.

    Hence, by selecting $k-1 = \lceil 1/\eps \rceil$, we obtain a $(1 + \eps)$-approximation algorithm that runs in $n^{\mathcal{O}(1/\eps)}$ time.
\end{proof}

%TODO: Corollary or something with approximation algorithm

\subsubsection{Multiple Time Windows} \label{sec:DCdtw}
    We show that the COP-MTW on directed cycles is NP-complete, implying NP-completeness for the OP-MTW.

\begin{theorem}\label{thm:cycMTW}
    The COP-MTW on directed cycles is NP-complete.
\end{theorem}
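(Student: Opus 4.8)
The plan is to show NP-membership and NP-hardness separately. Membership in NP is routine: by a reduction analogous to Lemma~\ref{lem:dmax2n} (event compression), a yes-instance admits a feasible walk whose waiting times can be taken polynomially bounded, so a proper walk described by its $\mathcal{O}(n)$ visit times per vertex is a polynomial certificate that can be checked in polynomial time. The real work is the hardness reduction.

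For NP-hardness I would reduce from a strongly NP-complete scheduling-type problem — the natural candidate is \textsc{3-Partition}, or alternatively a single-machine scheduling problem with release times and deadlines where preemption is or is not allowed. The intuition for why multiple time windows make covering hard on a directed cycle, whereas a single window was tractable (Theorem~\ref{thm:cycTSP}): with one window per vertex, the algorithm of Theorem~\ref{thm:cycTSP} can greedily push visit times forward monotonically; with several windows per vertex, visiting a vertex ``early'' (in an early window) versus ``late'' (in a later window) is a genuine binary choice, and these choices interact across vertices through the fact that the walk moves at unit speed around the cycle. I would encode each \textsc{3-Partition} element $a_i$ as a gadget vertex that has several time windows — one per candidate bucket — so that collecting that vertex forces the walk to ``spend'' $a_i$ units of time inside exactly one bucket's segment of the timeline. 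The cycle length $C$ and a collection of ``anchor'' vertices whose windows are single tight points would be used to force the walk, on each of the $k$ laps around the cycle, to be at a prescribed place at a prescribed time, so that lap $t$ corresponds to bucket $t$ and the slack available during that lap is exactly the bucket capacity $B = (\sum a_i)/k$. Then a feasible covering walk exists if and only if the elements can be partitioned into buckets each summing to $B$.

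Concretely, the key steps in order would be: (1) fix the cycle on $n$ vertices with unit (or chosen) edge costs so one lap costs $C$; (2) introduce per-bucket anchor vertices with degenerate windows $[\tau,\tau]$ pinning the walk's position at the lap boundaries, so the walk is forced into a near-sprint on each lap except for a controlled total slack of $B$ per lap; (3) for each \textsc{3-Partition} element, create a vertex (or a short chain of vertices) with one window per lap, each window positioned and sized so that collecting it during lap $t$ consumes exactly $a_i$ of lap $t$'s slack; (4) argue the forward direction — a valid partition yields a walk that on lap $t$ dawdles exactly the elements assigned to bucket $t$, hitting every window; (5) argue the converse — any covering walk, being proper (monotone, unit-speed), must hit each element's window on some single lap, the anchors force total dawdling per lap to be at most $B$, and since $\sum a_i = kB$ every lap must use exactly $B$, giving a partition; (6) note all numbers are polynomially bounded (using the event-compression idea) so the reduction is polynomial and, since \textsc{3-Partition} is strongly NP-complete, this gives NP-hardness even with polynomially bounded integer data.

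The main obstacle I expect is getting the gadget arithmetic to line up exactly: because the walk has unit speed and can only wait at vertices (not in the middle of edges), I must make sure the ``slack'' a walk can absorb on a given lap is cleanly additive over the elements it chooses to collect on that lap, with no spurious way to gain or lose time (e.g. by skipping an anchor, or by collecting an element's window on a boundary shared between two laps). Pinning this down — choosing the window endpoints and the placement of element-vertices around the cycle so that the only feasible behaviours are the intended ones, and in particular ruling out ``cheating'' schedules that exploit the discreteness of vertex positions — is the delicate part, and is presumably where the authors' construction puts its effort.
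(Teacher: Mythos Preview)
Your approach differs from the paper's: the authors reduce from \textsc{3SAT}, not \textsc{3-Partition}, and the gap is exactly where you anticipate it.

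The problem with step~(3) is that on a directed cycle the walk traverses every vertex on every lap regardless of which windows it hits, so the time spent in a lap is the cycle length plus the total waiting, and waiting is fungible across the lap. There is no evident way to make ``collecting element $i$ in lap $t$'' burn exactly $a_i$ units of that lap's slack independently of the other elements collected in the same lap. Point windows only pin the \emph{cumulative} delay at each position, so feasibility of a subset $S$ within one lap becomes a monotone one-dimensional scheduling condition on the chosen points, not the additive constraint $\sum_{i\in S}a_i\le B$ that \textsc{3-Partition} requires; a chain of $a_i$ vertices per element does not help either, because in COP every vertex of the chain must be covered and nothing prevents splitting a chain across several laps. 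You correctly flag this as the main obstacle, but the proposal offers no mechanism to resolve it, and the natural candidates fail for the reasons above.

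The paper's \textsc{3SAT} reduction sidesteps numeric bookkeeping entirely. On a unit-cost cycle of length $C=n+m$ with one vertex per variable and one per clause, variable vertex $v_i$ gets exactly two point windows $\{2iC,\,2iC+C-1\}$; which of the two the walk hits encodes the truth value of $x_i$. Each clause vertex gets one point window per literal it contains, placed so that the walk can reach it during the phase associated with $x_i$ if and only if the chosen assignment of $x_i$ satisfies that literal. Correctness is then pure reachability arithmetic on the cycle (can the walk get from $v_i$'s chosen point to the clause point in time?), with no slack accounting required. The multiplicity of windows lives on the clause vertices (one window per occurrence as a literal), which is exactly why the single-window version COP-1TW remains polynomial on directed cycles.
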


\begin{proof}[Sketch]
     Clearly this problem lies in NP as we can non-deterministically guess a permutation of vertices and check if we can collect all vertices according to this order.
     
     In the following, we show the NP-hardness by  a reduction from the well-known NP-complete problem 3SAT.  
    Consider a 3SAT instance with a set $X$ of $n$ variables and a set $\Gamma$ of $m$ clauses. 
    In 3SAT, each clause $\gamma \in \Gamma$ has exactly three literals, and the goal is to find an assignment of $X$ such that the formula is satisfied. 
    We reduce this problem to the COP-MTW on a directed cycle with $C = n + m$ vertices.
    Since in this proof for the time windows only discrete points instead of multiple intervals are needed, time windows will be described by sets of time points. 
    For every $1 \leq i \leq n$, each variable $x_i \in X$ corresponds to $v_i$ in the cycle with time points $\{2iC, 2iC + C - 1\}$.  
    Furthermore, for every $1 \leq j \leq m$, clause $\gamma_j \in \Gamma$ corresponds to vertex $v_{n + j}$ with an initial empty time window $\{\}$ (see Figure \ref{fig:DCdtw}).
    % As in this proof for the time windows only discrete points instead of multiple intervals are needed, time windows will be described by sets of time points. 
    % For $1 \leq i \leq n$, we set the time windows of vertex $v_i$ to be $\{2iC, 2iC + C - 1\}$. 
    Then, for every clause $\gamma_j$ in which $x_i$ appears as a positive literal, the time point $2iC + (n + j - i)$ is added to vertex $v_{n+j}$. Otherwise, if $x_i$ appears as a negated literal in $\gamma_j$, the time point $2iC + C - (m - j + i) - 1$ is added to $v_{n + j}$ (see Figure~\ref{fig:DCdtw} for an example).
    Note that in this construction, no profit can be collected at time $t \in [iC, 2iC)$ for any $0 \leq i \leq n$.

     \begin{figure}
        \vspace{-3mm}
         \centering
         \includegraphics{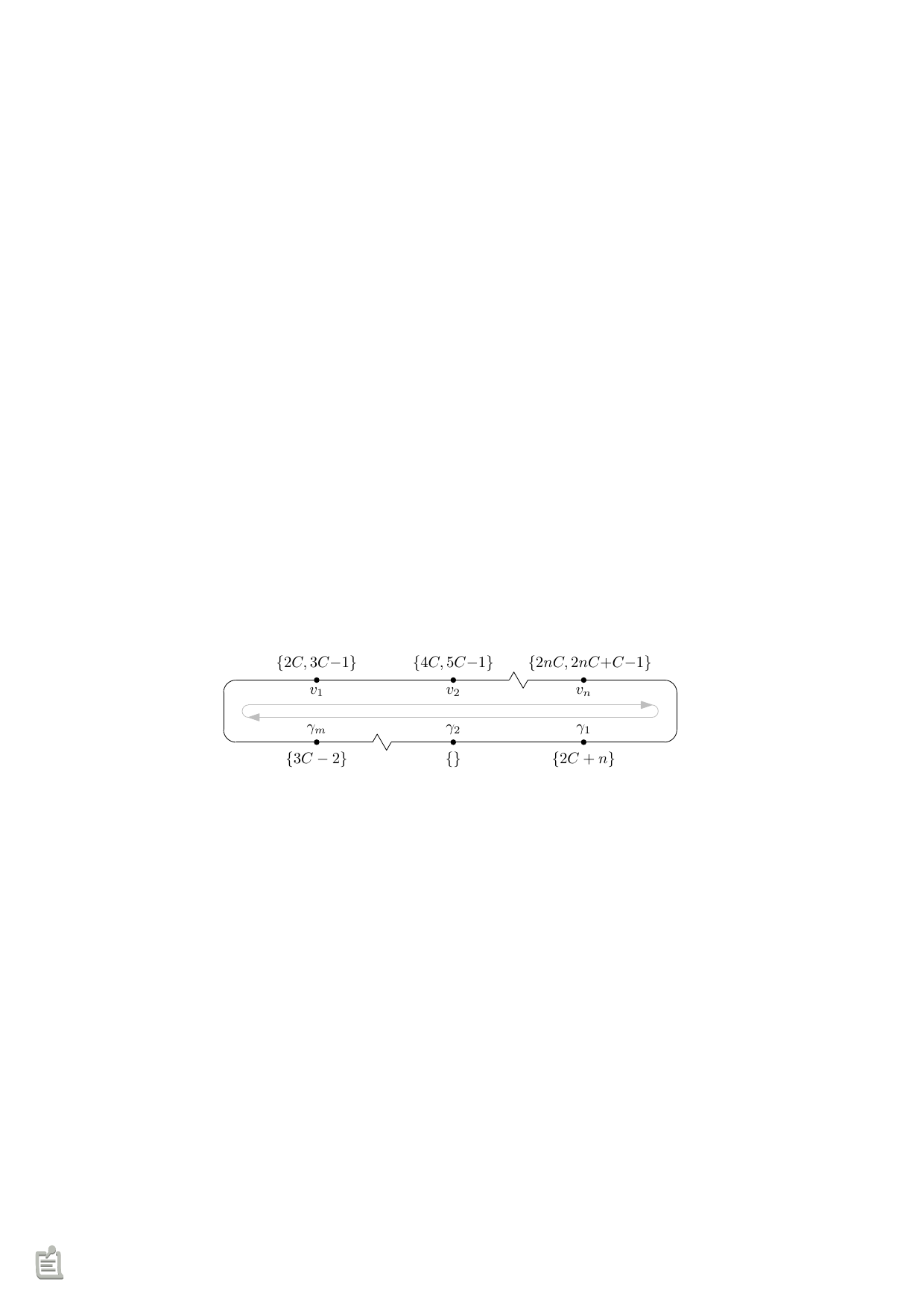}
         \caption{Example reduction from 3SAT, where $v_1$ appears in $\gamma_1$ as a positive and in $\gamma_m$ as a negated literal.}
         \label{fig:DCdtw}
        \vspace{-5mm}
     \end{figure}

    We will now show that in the time range $[2iC, 3iC)$ it is not possible to collect the vertices corresponding to the variable $x_i$ and clauses $\gamma_j, \gamma_{j'}$ where $x_i$ appears as a positive and negative literal respectively. First of all, to collect $x_i$ we must be at vertex $v_i$ at time $2iC$ or $2iC + C - 1$. 
    In the case that the $v_i$ is collected at time $2iC$, then it is possible to collect the vertex corresponding to $\gamma_j$ ($v_{n + j}$), since the distance between these vertices is exactly $n + j - i$. However, in this case it is not possible to collect $v_{n + j'}$: On one side, $v_i$ must be collected before $v_{n+j'}$ as $2iC + C - (m - j' + i) - 1 \geq 2iC$. However, the $v_{n + j'}$th vertex cannot be collected after $v_{i}$ either, since the distance between these vertices is exactly $n + j' - i$. This means that the earliest time $v_{n + j'}$ can be reached is at $2iC + n + j' - i = 2iC + C - n - m + n + j' - i = 2iC + C - (m - j' + i) > 2iC + C - (m - j' + i) - 1$. Hence, if $v_i$ is collected at time $2iC$, only the vertices corresponding to clauses where $v_i$ appears as a positive literal can be collected.

    In the case that $v_i$ is collected at time $2iC + C - 1$, then the vertex corresponding to $\gamma_{j'}$ can be collected at an earlier time, since the distance is exactly $m - j' + i$. Furthermore, analogous to the previous case, we know that $2iC + (n + j - i) \leq 2iC + C - 1$, so $v_{n+j}$ must be collected before $v_i$. However, this distance is $m - j + i$ and $2iC + (n + j - i) + (m - j + i) = 2iC + C > 2iC + C - 1$. Therefore, it is never possible to collect the vertices corresponding to $v_i$, $\gamma_j$, and $\gamma_{j'}$ between the times of $2iC$ and $3iC$.
     
    Then, if there exists a feasible solution in the 3SAT instance, the assignment could be used to choose when to visit the vertices corresponding to the variables. Since every clause is satisfied, this means that in our instance every clause vertex can be reached from at least one vertex variable, giving a valid walk that collects all vertices. Analogously, if a valid walk exists in the COP-MTW instance, we can use the time of collecting the vertices corresponding to the variables for an assignment of the 3SAT instance.
\end{proof}

\section{The \OP on Dynamic Graphs} \label{sec:dyngraphs}

In the \OP with time windows, profits are only collectable in certain time windows assigned to the corresponding vertex. 
A similar extension of the \OP is to add time windows to the edges and thus consider the \OP in a setting with dynamic graphs. For an instance of the \OP on dynamic graphs with input graph $G=(V,E)$, the time windows for edges are sets of intervals $\mathcal{I}_e$ for all $e \in E$. The graph at time step $i$ is then $G_i=(V,E_i)$ with $E_i=\{e \in E \mid \exists I \in \mathcal{I}_e \text{ s.t. } i \in I \}$. We say an edge $e$ is \emph{active} at time step $i$, if $e \in E_i$.

In contrast to the \OP with time windows, directed paths or cycles can be solved trivially by simply advancing the walk whenever possible. Surprisingly, even dynamic undirected paths can be solved with a simple algorithm, as shown in Proposition \ref{lem:dyn_path}. However, the problem immediately becomes NP-hard when dynamic trees are considered, as shown in Proposition~\ref{prop:dyn_trees}.

\begin{proposition}\label{lem:dyn_path}
Given an instance of the \OP with a dynamic undirected path $(V, E)$, this instance can be solved in $\mathcal{O}(n^2)$ time.
\end{proposition}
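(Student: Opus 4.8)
The plan is to exploit the linear structure of the path. On an undirected path, the set of vertices a walk has visited by any moment is always a contiguous subpath $[v_l,v_r]$ that contains the start vertex $s=v_k$: in order to have reached both $v_l$ and $v_r$, the walk must have passed through every vertex between them, and it has also visited $s$. Hence the profit of a walk depends only on its \emph{final} visited interval and equals $\sum_{i=l}^{r}\pi(v_i)$. So it suffices to decide, for each of the $\mathcal{O}(n^2)$ intervals $[v_l,v_r]$ with $l\le k\le r$, whether some valid walk can cover all of $v_l,\dots,v_r$ within the budget, and then output the interval of largest profit among the feasible ones.

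For a fixed target interval $[v_l,v_r]$ I would argue that only two \emph{greedy} walks need to be considered. Let $W_{LR}$ be the walk that first marches from $s$ left to $v_l$, crossing each edge at the earliest moment it is active, and then marches right to $v_r$ doing the same (either leg may be empty if $l=k$ or $r=k$); define $W_{RL}$ symmetrically. Any walk covering $[v_l,v_r]$ must at some point stand on $v_l$ and at some point stand on $v_r$; truncating it once both extremes have been reached does not change the covered interval, so assume it ends there and, w.l.o.g., reaches $v_l$ first. The claim is that $W_{LR}$ finishes no later than this walk, whence $[v_l,v_r]$ is coverable within $B$ iff $\min\{\mathrm{time}(W_{LR}),\mathrm{time}(W_{RL})\}\le B$. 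The proof of the claim is an edge-by-edge exchange argument resting on monotonicity of directed marches: for a march in a fixed direction, the earliest time it can have crossed a given prefix of edges is a non-decreasing function of its departure time; and any walk that reaches $v_l$ and only afterwards $v_r$ must cross, at least once and in the appropriate direction, every edge that $W_{LR}$ crosses, so replacing it by the greedy march, which performs each crossing at the first possible time, can only help.

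The algorithm then just evaluates these greedy walks for all intervals. Precompute prefix sums of $\pi$, so that the profit of any interval is available in $\mathcal{O}(1)$. Compute, incrementally in $\mathcal{O}(n)$ total, the time $L[l]$ at which the leftward greedy march from $s$ first reaches $v_l$ (each value follows from the previous one by one ``earliest active time at or after $t$'' query on a single edge), and symmetrically $R[r]$. Then, for each $l\le k$, run the rightward greedy march from $v_l$ starting at time $L[l]$ and record, as it passes $v_k,v_{k+1},\dots,v_n$, the current time as $\mathrm{time}(W_{LR})$ for $[v_l,v_r]$; this is $\mathcal{O}(n)$ per $l$, hence $\mathcal{O}(n^2)$ overall. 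Perform the symmetric $\mathcal{O}(n^2)$ computation for the $W_{RL}$ times (the degenerate intervals $r=k$ or $l=k$ are subsumed). A final sweep over the $\mathcal{O}(n^2)$ intervals returns the maximum-profit interval whose better greedy walk finishes within $B$, together with that walk. Assuming each edge-activity query takes constant time (e.g., after sorting, per edge, the intervals in $\mathcal{I}_e$), the whole algorithm runs in $\mathcal{O}(n^2)$.

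I expect the second paragraph — the dominance of the greedy walks — to be the crux. With dynamic edges it is tempting to believe that exploiting an edge's early activity window and then turning back could beat a monotone march; the resolution is that turning back re-crosses that very edge, so greed still wins, but making this rigorous requires the monotonicity-under-earlier-departure property applied along the whole sequence of crossings. The remaining ingredients — the interval characterization of visited sets and the incremental $\mathcal{O}(n^2)$ bookkeeping — are routine.
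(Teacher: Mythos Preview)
Your proposal is correct and follows essentially the same approach as the paper: guess the visited interval $[v_l,v_r]\ni s$ and test the two ``one-turn'' greedy walks. The paper's own proof is a three-sentence sketch that simply asserts a shortest optimal walk goes ``as directly as possible'' to one endpoint and then to the other; you supply precisely what the paper omits, namely the dominance argument for the greedy marches (via monotonicity of earliest-arrival times) and the incremental bookkeeping that keeps the total work at $\mathcal{O}(n^2)$ rather than $\mathcal{O}(n^3)$.
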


\begin{proof}
     Let $V(G) = \{v_1, \cdots, v_n\}$ with $E(G) =\{\{v_i, v_{i+1} \} \mid 1 \leq i < n\}$, where $v_\sigma = s$.
    Let $\OPT$ be the profit collected by an optimal walk $W$. Let $v_i$ and $v_j$ be vertices such that $1 \leq i \leq \sigma \leq j \leq n$ and $i \leq v_w \leq j$ for all $v_w \in W$.
    A shortest optimal walk would start at $s=v_\sigma$, walk as directly as possible to $v_i$ ($v_j$ respectively) and then to $v_j$ ($v_i$ respectively). 
    Thus, by guessing $v_i$ and $v_j$ we obtain an optimal walk in $\mathcal{O}(n^2)$ time.
\end{proof}

% \begin{proof}
% Let $V = \{v_1, \cdots, v_n\}$ with $E=\{\{v_i, v_{i+1} \} \mid 1 \leq i < n\}$, where $v_\sigma = s$.
% Let $\OPT$ be the profit collected by an optimal walk $W$. Let $v_i$ and $v_j$ be vertices such that $1 \leq i \leq \sigma \leq j \leq n$ and $i \leq v_w \leq j$ for all $v_w \in W$.
% In case that $v_i$ is visited by $W$ before $v_j$, then a 
% % Moreover, assume w.l.o.g, that $v_i$ is visited by $W$ before $v_j$. 
% walk $W'$ that traverses $G$ as fast as possible to $v_i$ and then to $v_j$ trivially collects $\OPT$ profit and does not exceed the budget.
% Analogously, when $v_j$ is visited by $W$ before $v_i$, a walk $W'$ traversing to $v_j$ and to $v_i$ collects $\OPT$ profit without exceeding the budged.
% Hence, we can simply  $v_i$ and $v_j$ in order to obtain an optimguessal walk in $\mathcal{O}(n^2)$.
% \todo{Do I need to explain in more detail why W' does not exceed the budget? even though this is super trivial?}
% \end{proof}

\begin{proposition}\label{prop:dyn_trees}
The \OP is NP-hard even on dynamic trees.
\end{proposition}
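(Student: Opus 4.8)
The plan is to reduce from a standard NP-hard problem — I would use 3SAT, mirroring the reduction style already employed in Theorem~\ref{thm:cycMTW}, though a reduction from a scheduling-type problem would also work. The key idea is that on a static tree the \OP is already NP-hard (this is stated as known earlier, to be proven in \Cref{sec:OP}), so strictly speaking one could argue this follows trivially; however, the proposition claims hardness \emph{even with the dynamic structure doing the work}, i.e.\ presumably with unit profits and unit costs, matching the remark in the introduction that ``the \OP on dynamic graphs remains NP-hard on trees with unit profit and unit cost.'' So the real content is a reduction that uses only the edge time windows, not weighted profits, to encode the combinatorial choice.

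First I would build a tree that is a ``spider'' or ``caterpillar'': a central path (the spine) from which short pendant paths (legs) branch off, with $s$ placed at one end of the spine. Each variable $x_i$ gets a gadget consisting of two legs, a ``true'' leg and a ``false'' leg, attached at a designated spine vertex; exactly one of the two legs will be reachable within its active window, and walking into a leg and back costs a fixed amount of time, forcing a commitment. The budget $B$ is set so that the walk can afford to traverse the spine once (or a bounded number of times) plus dip into exactly one leg per variable gadget. Each clause $\gamma_j$ is represented by a vertex (with profit $1$) sitting at the end of a leg whose connecting edge is active only during a narrow window that is timed to coincide with the walk's passage past variable gadget $x_i$ precisely when $x_i$ was set to the literal satisfying $\gamma_j$. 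Collecting all clause vertices plus all variable-leg vertices then requires, and is required by, a satisfying assignment, so $\OPT$ equals the total number of profit-bearing vertices iff the formula is satisfiable.

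The main obstacle — and the step I expect to spend the most care on — is the timing synchronization: on a tree a walk may revisit vertices and backtrack, so I must rule out ``cheating'' walks that, say, set $x_i$ true for one clause and then come back later to also grab a clause vertex that wanted $x_i$ false. The fix is a budget/time argument analogous to the one in Theorem~\ref{thm:cycMTW}: by spacing the variable gadgets far enough apart along the spine (using, e.g., $\Theta(C)$-length spacers as in Lemma~\ref{lem:dmax2n}) and making the clause-leg edges active only in windows of length $1$, I ensure that each variable gadget can be ``used'' in only one direction and only once, and that the clause vertices can only be collected on the single forward sweep. A secondary technical point is confirming membership in NP (guess the order/times of vertex visits and verify in polynomial time, exactly as in Theorem~\ref{thm:cycMTW}) and checking that all constructed quantities (windows, budget, tree size) are polynomial in $n+m$; with the coarse spacing these are at most polynomial, so the reduction is genuinely polynomial-time.

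Finally I would verify both directions explicitly: given a satisfying assignment, describe the walk (sweep down the spine, at each variable gadget enter the leg corresponding to the assigned value, timed so that every clause-leg edge the walk needs is active, then collect that clause vertex) and check it collects every profit vertex within budget; conversely, given a walk achieving full profit, read off the assignment from which leg of each variable gadget was entered, and use the timing argument above to show this assignment must satisfy every clause since each clause vertex was collected only if its edge was active, which happens only when the relevant variable took the satisfying value.
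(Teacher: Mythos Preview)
The paper reduces from \textsc{3-Partition} rather than 3SAT. The tree is a spider with $s$ at the centre: one leg of $i$ unit edges per integer $i$ in the multiset (all leg edges active on $[0,2mT+2m]$), plus $m$ control leaves whose single edges to $s$ are active only in the narrow windows $[2iT+2(i-1),\,2iT+2i]$. Collecting every vertex forces the walk to be back at $s$ at each control time, and between consecutive controls it has exactly $2T$ time units to spend entering and leaving legs, so those legs must sum to $T$. The reduction is short precisely because \textsc{3-Partition} is already a numerical packing constraint that maps directly onto elapsed time; no disjunction gadget is needed.

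Your 3SAT plan has a real gap at the clause gadget, and it is not the ``cheating walks'' issue you flag. A clause $\gamma_j$ involves three variables $x_a,x_b,x_c$, and you want its single leg's edge to become active at the moment the walk passes the gadget for $x_a$ (or $x_b$, or $x_c$) having made the satisfying choice. But on a spine swept once, the walk is near $u_a$ and near $u_c$ at very different times, while the clause leg is attached at exactly one spine position; the walk cannot enter it while standing elsewhere. Worse, your own spacing fix---large gaps so that timing ``resets'' between gadgets---means the walk's arrival time at any fixed spine position is independent of earlier true/false choices, so the clause-edge window cannot distinguish them. The analogy with Theorem~\ref{thm:cycMTW} fails because on the directed cycle the walk passes \emph{every} vertex in every round, so each clause vertex is revisited during each variable's phase; a spine traversed once has no such recurrence. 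To make a 3SAT reduction work here you would have to abandon the spine for a star (returning to $s$ between gadgets, clause legs hanging off $s$ with one window per literal) and then give a tight global budget argument ruling out visiting both variable legs---none of which is in your sketch.
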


\begin{proof}
    We will prove that the \OP is NP-hard on dynamic trees by a reduction from the 3-partition problem. Given a multiset $I$ of $n = 3m$ integers, with total sum $mT$, i.e.~$\sum_{i\in I} s = mT$, can $I$ be partitioned into a set of triplets $I_1, \cdots, I_m$ such that the sum of each triplet is exactly $T$?

    Given such a 3-partition instance, we will create a new \OP instance on a dynamic graph $(V, E)$. 
    The constructed graph $(V,E)$ will be a spider graph with start node $s$ at the centre. 
    Then, for each integer $i \in I$, we connect a leg of length $i$ to $s$, whose edges are active at times $[0, 2mT + 2m]$. 
    Next, we add $m$ control vertices $c_1, \cdots, c_m$ with edges to $s$, such that the edge incident to $c_i$ is active at $[2iT + 2(i-1), 2iT + 2(i-1) + 2]$. 
    An example of such a graph can be seen in Figure \ref{fig:examplereduc}.
    
    \begin{figure}
        \centering
        \includegraphics{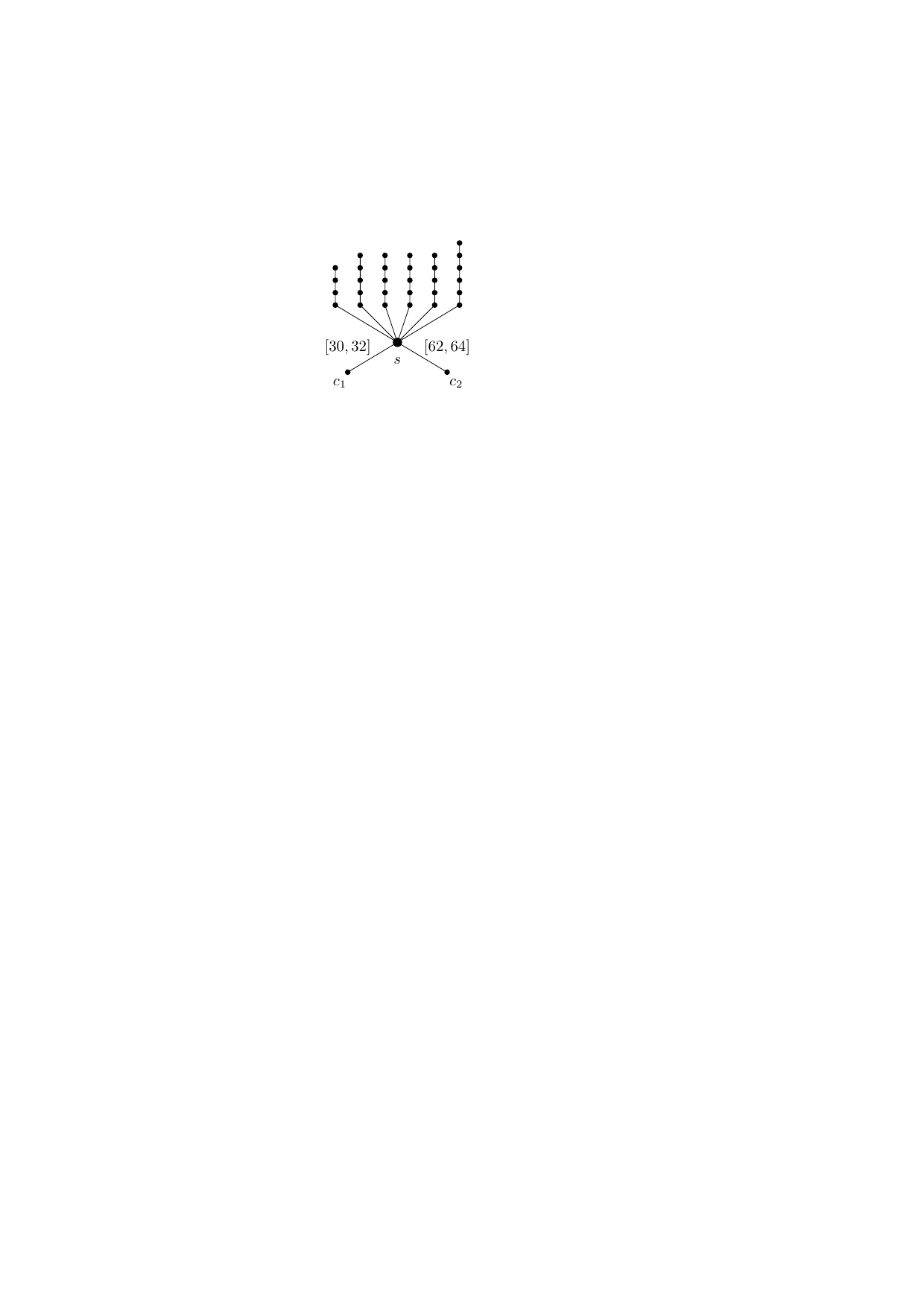}
        \caption{Example reduction for the instance $\{4, 5, 5, 5, 5, 6\}$, making $T = 15$. All upper edges are active at time $t \in [0, 2mT + 2m]$.}
        \label{fig:examplereduc}
    \end{figure}
    
    What remains is to show that it is possible to collect exactly $m + mT + 1$ profit in our \OP instance if and only if the 3-partition instance is a yes-instance. 
    First, assume walk $W$ exists which collects exactly $m + mT + 1$ profit. Note that this is exactly the number of nodes in $V$, which means that we visit every control vertex $c_i$ at exactly $2iT + 2(i-1) + 1$. Furthermore, since the intervals of all other vertices expire at $2mT + 2m$, we know that each edge is used at most twice. Hence, walk $W$ cannot first partially collect a leg and later return to this leg and collect the remainder. 
    Thus, walk $W$ must repeatedly collect exactly 3 legs with profit exactly $T$ and then collect a control vertex (there cannot be more than 3 legs collected between two control vertices, since every integer in $I$ is strictly between $T/4$ and $T/3$). 
    Partitioning $I$ based on the legs visited between two control vertices gives a valid solution to the 3-partition instance.
    
    Now suppose we can partition $I$ into triplets $I_1, \cdots, I_m$ such that the sum of each triplet is exactly $T$. Then there exists a walk in $(V,E)$, where we first visit the legs corresponding to the integers of $I_1$, this takes exactly $2T$ time steps and collects $T$ profit. Thereafter, we visit the first control vertex $c_1$, whose incident edge is active in interval $[2T, 2T + 2]$, and go back to $s$, resulting in 1 additional profit and 2 additional time steps. We repeat these steps for every interval $I_i$, each time collecting the corresponding legs and then visiting the $i$th control vertex. Therefore, this walk collects exactly $m + mT + 1$ profit. 
\end{proof}

% \begin{figure}
%     \centering
%     \includegraphics{hardness_dynamic_graphs.pdf}
%     \caption{Example reduction for the instance $\{4, 5, 5, 5, 5, 6\}$, making $T = 15$. All upper edges are active at time $t \in [0, 2mT + 2m]$.}
%     \label{fig:examplereduc}
% \end{figure}

% \begin{proof}[Sketch]
% We will prove that the \OP is NP-hard on dynamic trees by a reduction from the 3-partition problem. Given a multiset $I$ of $n = 3m$ integers, with total sum $mT$, i.e. $\sum_{i\in I} s = mT$, can $I$ be partitioned into a set of triplets $I_1, \cdots, I_m$ such that the sum of each triplet is exactly $T$?

% Given such a 3-partition instance, we will create a new \OP instance on a dynamic graph $G$. 
% The constructed graph $G$ will be a spider graph with start node $s$ at the centre. 
% Furthermore, for each integer $i \in I$, connect a leg of length $i$ to $s$, whose edges are active at times $[0, 2mT + 2m]$. 
% Next, add $m$ control vertices $c_1, \cdots, c_m$ with edges to $s$, such that the edge incident to $c_i$ is active at $[2iT + 2(i-1), 2iT + 2(i-1) + 2]$ (see Figure \ref{fig:examplereduc}). 
% Then it is possible to collect exactly $m + mT + 1$ profit in our \OP instance if and only if the 3-partition instance is a yes-instance. \hfill \qed
% \end{proof}

\section{The Orienteering Problem without time windows} \label{sec:OP}
%\subsection{Approximation Algorithm}\label{sec:DPunit}

In general, the \OP is NP-hard even on trees and thus also on graphs with bounded tree-width. 
If vertex profits are unit, the \OP is still NP-hard in general~\cite{blum2007approximation}, but we show that in this case it is possible to give a dynamic program on a tree decomposition. We will extend this result to a $(1+\eps)$-approximation for the \OP with arbitrary vertex profits on graphs with bounded \tw.  

Note that, it is quite easy to show that the \OP with unit vertex profits is polynomial for a tree, by giving a dynamic program: For every vertex we only need to distinct if it has been passed an even or an uneven number of times. 
Contrary, for graphs with bounded tree-width, we have to add an additional partition of special structures to the dynamic program. 

\begin{lemma}\label{thm:dp-tree}
    Given an instance of the \OP with unit vertex profit and input tree $G$, the \OP is solvable in $\mathcal{O}(n^3)$ time.
\end{lemma}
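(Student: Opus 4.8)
The plan is to root the tree $G$ at the start vertex $s$ and design a dynamic program over the rooted tree that, for each subtree, records the minimal cost needed to collect a given number of distinct vertices, parameterised by whether the walk finally leaves that subtree or returns to (and stays near) its root. Fix a vertex $v$ with children $u_1, \dots, u_k$. The crucial structural observation is that any walk restricted to the subtree $T_v$ either (i) enters $T_v$, does some collecting, and comes back out through the edge $vv'$ to its parent — in which case every edge it uses inside $T_v$ is traversed an even number of times — or (ii) ends somewhere inside $T_v$, in which case exactly the edges on the root-to-endpoint path are used an odd number of times and all others an even number. Since cost is $c(e)$ per traversal and we only care about which vertices are collected (unit profit, so a vertex counts once iff it is ever visited), a vertex of $T_v$ is collected precisely if the walk uses at least one edge incident to it, which for the child subtrees means: child $u_i$'s subtree contributes iff the walk actually descends into it.

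Concretely, I would define $A[v][p]$ = minimum total cost of a closed sub-walk that starts and ends at $v$ and collects exactly $p$ distinct vertices of $T_v$ (including $v$ itself), and $B[v][p]$ = minimum cost of a sub-walk that starts at $v$, collects exactly $p$ vertices of $T_v$, and ends at some vertex of $T_v$ (not necessarily $v$). The recurrences combine the children one at a time via an inner knapsack-style convolution over the number of collected vertices: for $A$, descending into child $u_i$ to collect $q$ of its vertices costs $2c(vu_i) + A[u_i][q]$; for $B$, at most one child may be entered ``and not returned from'', contributing $c(vu_i) + B[u_i][q]$, while the remaining chosen children each contribute $2c(vu_i) + A[u_i][q]$. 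Merging children left to right with these two tables is the standard tree-knapsack DP; the profit collected is the index, and at the root $s$ the answer is the largest $p$ with $\min(A[s][p], B[s][p]) \le B$.

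For running time: at each vertex $v$ with subtree size $n_v$ the convolution over children is the classical $\mathcal{O}(n_v^2)$ tree-knapsack analysis (a tighter bound than $n_v \cdot (\text{sum of child sizes})$, charging pairs of vertices in different subtrees), which sums to $\mathcal{O}(n^2)$ over the whole tree for a single pass; the extra factor of $n$ comes from the at-most-one ``open'' child being tracked separately at each node, or alternatively from iterating over the possible endpoint/turning structure. Correctness follows by induction on subtree height using the even/odd edge-traversal dichotomy above: any optimal walk induces, on each $T_v$, a sub-walk of one of the two types whose cost and collected set the tables dominate, and conversely any combination realisable by the tables can be stitched into a genuine walk by concatenating the child excursions in any order and finally descending into the designated open child last.

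The main obstacle I anticipate is the bookkeeping for the ``open'' endpoint: a walk need not end at a leaf, and when it turns around inside $T_v$ one must be careful that the edge-parity argument still pins down exactly which vertices are collected and that no vertex is double-counted across the $A$/$B$ merge. Handling this cleanly — i.e. proving that restricting to ``at most one open child per node, entered last'' loses no optimal solution — is the heart of the argument, and it is exactly this extra case (trivial on a path, where there is a unique such turning structure) that distinguishes the tree DP from the folklore path DP alluded to just before the lemma.
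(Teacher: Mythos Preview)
Your proposal is correct and essentially the same approach as the paper's: the paper's even/uneven degree parameter $d\in\{e,u\}$ is exactly your $A$/$B$ distinction between closed and open sub-walks rooted at $v$. The only presentational difference is that the paper first binarises the tree (adding zero-profit, zero-cost dummy nodes so every inner vertex has two children) and then writes the recurrence explicitly for two children, whereas you handle arbitrary degree directly via the standard left-to-right knapsack convolution; your running-time justification for the extra factor of $n$ is a bit loose, but the stated $\mathcal{O}(n^3)$ bound is certainly met.
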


\begin{proof}
    First, we see start node $s$ as the root of the tree. Furthermore, we assume that the maximum degree in the tree, this can be achieved by transforming $G$ by adding zero-profit vertices connected with zero-cost edges.
    
    Then, we will fill a dynamic program $a[v, \Pi, d]$, where $v \in V$, $0 \leq \Pi \leq n$, and $d \in \{e, u\}$. An entry in this dynamic program will contain the cost of a minimum path which starts at vertex $v$, collects exactly $\Pi$ profit and in which $v$ has even ($e$) or uneven $(u)$ degree. Then, $a$ can be filled by processing the vertices in a topological order and the entries can be filled based on the type of $v$:

    \textbf{Leaf} let $v$ be a leaf in $G$, then $a[v, 0, e] = a[v, 0, u] = +\infty$, $a[v, 1, e], a[v, 1, u] = 0$, and all others $+\infty$
    
    \textbf{Inner node} let $v$ be an inner node in $G$, with children $w_1, w_2$, then
        \begin{itemize}
            \item[] $a[v, i, e] =$ \\
            \item[]
            \hspace*{0.2cm} $\min \left. \begin{cases}
            a[w_1, i{-}\pi(v), e] + 2c(v, w_1),\\
            a[w_2, i{-}\pi(v), e] + 2c(v, w_2),\\
            \displaystyle\min_{1 \leq j \leq i{-}\pi(v)} a[w_1, j, e] + a[w_2, i{-}\pi(v){-}j, e] + 2c(v, w_1) + 2c(v, w_2)
        
            \end{cases}\right\}$\\
            \item[]  
                % c[t, X, \P, \iPi] = \min \left\{ \min_{\P'} c[t', X, \P', \iPi] + c(uv), c[t', X, \P, \iPi] \right\},
                $a[v, i, u] =$ \\
            \item[]
            \hspace*{0.2cm} $\min \left. \begin{cases}
                a[w_1, i{-}\pi(v), u] + c(v, w_1),\\
                a[w_2, i{-}\pi(v), u] + c(v, w_2),\\
                \displaystyle\min_{1 \leq j \leq i - \pi(v)} a[w_1, j, u] + a[w_2, i{-}\pi(v){-}j, e] + c(v, w_1) + 2c(v, w_2),\\
                \displaystyle\min_{1 \leq j \leq i - \pi(v)} a[w_1, j, e] + a[w_2, i{-}\pi(v){-}j, u] + 2c(v, w_1) + c(v, w_2)
                \end{cases}\right\}$
        \end{itemize}

    the best profit is thus obtained by finding the maximum $1 \leq \Pi \leq n$ such that $c[s, \Pi, e] \leq B$ or $c[s, \Pi, u] \leq B$\
\end{proof}

\begin{lemma}\label{thm:main}
    Given an instance of the \OP with unit vertex profits and input graph $G$ of bounded \tw $k$, the \OP is solvable in $\mathcal{O}(n^2) \cdot k^{\mathcal{O}(k)}$ time.
    % Let $G$ be a graph with unit vertex profit and $\mathcal{T}$ a tree decomposition of $G$ with width at most $k$.
    % The \OP on $G$ is solvable in $|V(G)|^2 \cdot k^{\mathcal{O}(k)}$ time.
\end{lemma}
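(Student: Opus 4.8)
The plan is to lift the tree dynamic program of \Cref{thm:dp-tree} to a nice tree decomposition of width $k$. I would first root the decomposition and make it nice (introduce, forget, join, leaf nodes), and for each bag $X_t$ I would track, for the partial solution restricted to the subgraph $G_t$ induced by the vertices forgotten below $t$ together with $X_t$, the following combinatorial data: for every vertex $v \in X_t$, its current parity (even/odd number of traversed incident edges) — since the walk is an Eulerian-type multigraph, only parity matters, exactly as in the tree case; and additionally a partition of $X_t$ into blocks, recording which bag vertices currently lie in the same connected component of the chosen edge multiset. The partition is the "additional partition of special structures" the introduction alludes to: unlike trees, a bounded-treewidth graph can have the partial solution split into several pieces that will only later be joined through a common bag, so we must remember how the odd-degree endpoints can be paired up. The DP entry $a[t, \text{parity vector}, \text{partition}, \Pi]$ then stores the minimum total edge cost of such a partial multiset collecting exactly $\Pi$ distinct vertices.

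Next I would spell out the transitions. For an \emph{introduce} node the new vertex starts isolated with even parity; for a \emph{forget} node we require that the forgotten vertex $v$ has even parity (its component has been merged appropriately) unless it is $s$ in the final accounting, and we absorb its contribution to $\Pi$ (adding $1$ the first time a vertex is counted, which is easy since each vertex is forgotten exactly once); for an \emph{introduce-edge} node (convenient to use the edge-introduce variant of nice decompositions) we branch on how many times the edge is used — by a standard argument an optimal closed walk uses each edge $0$, $1$, or $2$ times, so the branching is over a constant set, flipping the parities of the two endpoints accordingly and merging their blocks; for a \emph{join} node we combine two child tables by adding parities modulo $2$ and merging the two partitions (taking the transitive closure of the union of the two block structures), summing costs and profits. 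At the root we require a connected partition (one block) and that every vertex has even degree except $s$, which may be odd if the walk is an open walk from $s$; then we read off the maximum $\Pi$ with cost at most $B$. Here I would invoke the same "walk = connected even multigraph (plus possibly one odd vertex $s$) $\Leftrightarrow$ closed/open walk exists" characterization used implicitly in \Cref{thm:dp-tree}, and note that reachability from $s$ is enforced by connectivity of the single block containing $s$.

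For the running time I would count states and transitions. Parity vectors give $2^{O(k)}$ choices; partitions of a $(k+1)$-element set give at most $(k+1)^{k+1} = k^{O(k)}$ choices (Bell number bound); $\Pi$ ranges over $O(n)$ values; and there are $O(n)$ nodes in the nice (edge-)decomposition. Leaf, introduce, forget, introduce-edge transitions cost $k^{O(k)}$ per entry, while the join is the bottleneck: naively convolving over $\Pi$ costs an extra factor $O(n)$, and merging partitions costs $k^{O(k)}$, so each join node costs $O(n^2)\cdot k^{O(k)}$ and all other nodes together cost $O(n)\cdot n \cdot k^{O(k)} = O(n^2)\cdot k^{O(k)}$. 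Summing over the $O(n)$ nodes — using that the total profit convolution at joins is the dominant term — gives the claimed $\mathcal{O}(n^2)\cdot k^{O(k)}$ bound.

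The main obstacle I anticipate is the connectivity bookkeeping at join nodes and the proof that the partition data suffices: I must argue that two partial solutions that agree on $(X_t, \text{parities}, \text{partition}, \Pi)$ are interchangeable, i.e., any way of completing one (through higher bags) completes the other with the same cost and profit. This is the usual "representative set / rank-based" subtlety for connectivity DPs; here, because profits are \emph{unit} and we only need the multiset of collected vertices (counted via forgets) rather than a specific routing, I expect the plain partition-tracking argument to go through without needing the Cut\&Count or rank-based speedups — but verifying that no additional information (such as which odd-degree vertices must be paired within a block) leaks across the bag boundary is the step that needs care. A clean way to handle it is to observe that an optimal walk corresponds to a connected spanning sub-multigraph of the chosen vertex set in which all degrees are even except possibly $s$, and that such a sub-multigraph can always be decomposed compatibly with the tree decomposition, so the DP over (parity, partition) is both sound and complete.
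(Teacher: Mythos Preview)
Your plan is essentially the paper's proof: a DP over a nice tree decomposition (with introduce-edge nodes and $s$ added to every bag) storing, per bag, the set of used bag vertices, their partition into connected components of the partial solution, and the profit count $\iPi$, minimising cost; edges are branched over with multiplicity $0$, $1$, or $2$, and the answer is read off at the root from the single-block state. The only structural difference is that you carry degree parities explicitly as part of the state, whereas the paper encodes the same information implicitly via an auxiliary ``collection of walks'' $G_{\P}$ attached to each partition and checks endpoint parity there; your formulation is cleaner and is exactly the ``additional partition of special structures'' the paper alludes to before the lemma.

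Two small points to watch. First, your forget rule (``forgotten vertex must have even parity unless it is $s$'') excludes open walks ending at some vertex $t\neq s$; since the OP only fixes the start, you must allow one further odd-degree vertex globally, e.g.\ by one extra flag in the state recording whether the non-$s$ endpoint has already been forgotten. Second, your running-time accounting slips: if each join node costs $O(n^2)\cdot k^{O(k)}$ and there are $O(n)$ of them, the total is $O(n^3)\cdot k^{O(k)}$, not $O(n^2)\cdot k^{O(k)}$. The paper's own analysis simply multiplies ``states per node'' by ``number of nodes'' and does not charge the $\rho$-convolution at joins separately, so your claimed bound matches the paper's, but the join cost deserves a closer look in either write-up.
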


\begin{proof}
    W.l.o.g assume that $\T = (T, \{X_t\}_{t \in V(T)})$ is a nice tree decomposition for $G$ of width $k$ as given in~\cite{cygan2015parameterized}. Note that their definition of a ''nice`` tree-decomposition is slightly different from the commonly used one, which includes \emph{introduce}, \emph{forget}, or \emph{join} node. In~\cite{cygan2015parameterized}, the authors give \emph{introduce edge} and \emph{introduce vertex} nodes, where the introduce vertex nodes correspond to introduce nodes in the standard sense and an introduce edge node $t$ for edge $uv\in E(G)$ is labelled by $uv$, has exactly one child $t'$ such that $X_t=X_{t'}$ and it holds that $u,v \in X_t$. Furthermore, for every edge $uv \in E(G)$, there is exactly one introduce edge node labelled by $uv$ in $\T$. 
    
    Further assume that $s \in X_t~ \forall t \in V(T)$ (increases tree-width by at most $1$) and $X_v = \{s\}$ if $v$ is the root or a leaf. Let $G_t$ for $t \in V(T)$ be the induced subgraph of $G$ with vertex set $\bigcup_{t' \succeq t} X_{t'}$, that is, the subgraph of $G$ containing all vertices in the bags of the subtree of $T$ rooted at $t \in V(T)$.
    
    For a walk $H$ starting at $s$, for some $t \in T$, the part of $H$ contained in $G_t$ is a non-empty set of disconnected walks. Let $F$ be such a set of disconnected walks that collects the most distinct vertices and let $X \subseteq X_t$ such that for any profit $1 \leq \iPi \leq n$ 
    it holds that $s \in V(F)$, $V(F) \cap X_t = X$, $|V(F)| \leq \iPi$, and $\P=\{ X_t \cap C \mid C \text{ is a connected component of } V(F)\}$ is a partition of $X$, where $V(F)$ is the set of vertices in $F$. 
    
    We then define $c[t, X, \P, \Pi]$ as the minimum cost of a possible $F$ in $G_t$.
    
    The best profit is thus obtained by finding the maximum $0 \leq \Pi \leq n$ such that $c[r, \{s\}, \{\{s\}\}, \Pi] \leq B$, where $r\in\T$ such that $G_r = G$.
    If $F$ does not exist, set $c[t, X, \P, \Pi] = +\infty$. Otherwise, fill $c[t, X, \P, \Pi]$ depending on the node type of $t$:
    
    \textbf{Leaf node.} Let $t$ be a leaf node. Then $X_t = \{s\}$, so for each profit $1 \leq \iPi \leq n$, set $c[t, \emptyset, \emptyset, 0] = c[t, \emptyset, \emptyset, \iPi] = c[t, \{s\}, \{\{s\}\}, \iPi] = 0$, and $c[t, \{s\}, \{\{s\}\}, 0] = +\infty$.
    
    \textbf{Introduce vertex node.} Let $t$ be an introduce vertex node with  child node $t'$ s.t.~$X_t = X_{t'} \cup \{v\}$, $v \notin X_{t'}$. As edges are introduced in introduce edge nodes, assume $v$ is isolated in $G_t$. Then for all $X \subseteq X_t$, partition $\P = \{P_1, P_2, \cdots, P_q\}$ of $X$, and $0 \leq \iPi \leq n$: If $v$ is in $X$, then $\{v\} \in \P$, that is $v$ is its own connected component. 
    Thus:
    \begin{align*}
        c[t, X, \P, \iPi] =
        \begin{cases}
        c[t', X \setminus \{v\}, \P \setminus \{\{v\}\}, \iPi - 1] & \text{if $v \in X$,} \\
        c[t', X, \P, \iPi] & \text{otherwise.}
        \end{cases}
    \end{align*}
    
    \textbf{Introduce edge node.}
    Let $t$ be an introduce edge node labelled with $uv$, and let $t'$ be child node of $t$. For each $X \subseteq X_t$, partition $\P = \{P_1, P_2, \cdots, P_q\}$ of $X$, and $0 \leq \iPi \leq n$ we consider four cases. If $u \notin X$ or $v \notin X$, then $\{u,v\} \not\subseteq V(F)$, hence, $c[t, X, \P, \iPi] = c[t', X, \P, \iPi]$. The same holds %when  $c(uv) > \B$ or 
    when $u,v$ are in $X$, but not in the same set of $\P$. 
    Assuming that $u, v \in X$ and $u,v \in P_i \in \P$, edge $uv$ has either been included in the solution or not. If not, we consider the same partition $\P$ at $t'$, otherwise $P_i$ has been obtained by merging two smaller sets of the partition, one containing $u$ and the other $v$. Hence,
    \begin{align*}
        % c[t, X, \P, \iPi] = \min \left\{ \min_{\P'} c[t', X, \P', \iPi] + c(uv), c[t', X, \P, \iPi] \right\},
        c[t, X, \P, \iPi] = \min \left. \begin{cases}
        \min_{\P' \in \mathbf{P'} } c[t', X, \P', \iPi] + c(uv),\\ \min_{\P'' \in \mathbf{P''}} c[t', X, \P'', \iPi] + 2c(uv), \\c[t', X, \P, \iPi] 
        \end{cases}\right\},
    \end{align*}
    
    where $\mathbf{P'}$ is defined as the set of all partitions of $X$, such that in partition $\P' \in \mathbf{P'}$ vertices $u$ and $v$ are in separate sets of $P'$ and after merging these sets partition $\P$ is obtained. Furthermore, these sets result in a walk where edge $uv$ is traversed exactly once. 
    This %$uv$ is traversed once 
    is checked by constructing an auxiliary structure $G_{\P'}$, which is a collection of walks where each connected component corresponds to exactly one set of $\P'$.
    A partition $\P'$ is in $\mathbf{P'}$ if and only if, for $u \in P_i \in \P'$, $P_i = \{u\}$ or if $u \neq s$ then $u$ must be of odd degree, otherwise if $u = v$ then $u$ must be of even degree. Furthermore, the constraints for $v \in P_j \in \P'$ are defined analogously. 
    
    Similarly, set $\mathbf{P''}$ is defined as the set of all partitions of $X$, such that in partition $\P'' \in \mathbf{P''}$ vertices $u$ and $v$ are in separate sets of $P''$ and after merging these sets partition $\P$ is obtained. However, these sets result in a walk where edge $uv$ is traversed exactly twice.
    This property does not require additional checking, since any two walks can be merged into a valid walk with an edge that is traversed twice.
    Furthermore, no additional sets of partitions are needed, since in an optimal walk each edge is used at most twice.
    
    \textbf{Forget node.}
    Suppose that $t$ is a forget node with child node $t'$ such that $X_t = X_{t'} \setminus \{ w \}$ for some $w \in X_{t'}$. Consider every $X \subseteq X_t$, partition $\P = \{P_1, P_2, \cdots, \}$ of $X$, and $0 \leq \iPi \leq n$. The solution for $X$ might either use the vertex $w$, in which case it is added to one of the sets in $\P$, or not, in which case the same partition in $t'$ remains optimal. Hence,
    \begin{align*}
        c[t, X, \P, \iPi] = \min \left\{ \min_{\P' \in \mathbf{P'}} c[t', X \cup \{w\}, \P', \iPi + 1], c[t', X, \P, \iPi] \right\},
    \end{align*}
    where $\mathbf{P'}$ is defined as the set of all partitions of $X \cup \{w\}$ where $\P' \in \mathbf{P'}$ is obtained by adding $w$ to one of the sets of $\P$.
    
    \textbf{Join node.}
    Suppose $t$ is a join node with children $t_1$ and $t_2$. 
    For an optimal solution of $G_t$, solutions of $G_{t_1}$ and $G_{t_2}$ need to be merged without creating cycles or destroying walks. The same auxiliary structure $G_{\P'}$ as in \emph{introduce edge nodes} can be used to encode the merging of $G_{t_1}$ and $G_{t_2}$. A partition $\P$ of $X$ is a \emph{proper walk merge} of partitions $\P_1$ and $\P_2$ if the merge of collections of walks $G_{\P_1}$ and $G_{\P_2}$ (treated as a multigraph) is a collection of walks whose family of connected components corresponds to $\P$. 
    Moreover, in the walk containing $s$, vertex $s$ must be one of the endpoints.
    Thus, 
    \begin{align*}
          c[t, X, \P, \iPi] = \min_{(\P_1, \P_2)\in \mathbf{P}, 0 \leq \mathit{\rho} \leq \iPi - |X|} \left\{c[t_1, X, \P_1, \mathit{\rho}] + c[t_2, X, \P_2, n - \mathit{\rho} - |X|]\right\},
    \end{align*}
    where $\mathbf{P}$ contains all pairs of partitions $\P_1$ and $\P_2$ such that $\P$ is a proper walk merge of $\P_1$ and $\P_2$. As the profit of vertices in $X$ is counted in both solutions corresponding to $\P_1$ and $\P_2$, $|X|$ must be subtracted. 
    Furthermore, the resulting budget must be distributed between $t_1$ and $t_2$.

    This completes the computation of $c$. As every bag has size at most $k + 2$ and profit $0 \leq \Pi \leq n$, the number of states per node is at most $n \cdot 2^{k+2} \cdot (k + 2)^{k + 2} = n \cdot k^{\mathcal{O}(k)}$.
    Since $T$ has $\mathcal{O}(n)$ nodes, the total running time of the algorithm is $\mathcal{O}(n^2) \cdot k^{\mathcal{O}(k)}$.
\end{proof}

While we mainly need result of \Cref{thm:main} to give a $(1+\eps)$-approximation algorithm for the general \OP on graphs with bounded tree-width, it also means that the \OP with unit vertex profits is fixed-parameter tractable for the parameter tree-width, since a tree decomposition of a fixed width can be computed in linear time~\cite{treewidth}. Note that this result has been obtained simultaneously to this work in \cite{ren2024approximation}, where the authors consider the P2P-Orienteering problem. This problem is very similar to the \OP as we define it, but includes start and end point of the required walk, instead of only a start point, and only considers unit profits. They show that the P2P-Orienteering problem is fixed-parameter-tractable on graphs with bounded tree-width by giving a different dynamic program than ours. 

% \medskip

Using similar techniques as in~\cite{chekuri2012improved, GAVALAS2015313, korula2010approximation}, \Cref{thm:main} can be extended to an approximation for graphs with bounded \tw and arbitrary profits.

%The proposed dynamic program on graphs with bounded \tw and unit vertex profit can be used to obtain a $(1 + \eps)$-approximation algorithm on arbitrary graphs with bounded \tw. To this end similar techniques are used as in~\cite{chekuri2012improved, GAVALAS2015313, korula2010approximation}.
\begin{theorem}
    There exists a $(1 + \eps)$-approximation algorithm for the \OP on graphs with bounded \tw $k$ that takes $\mathcal{O}(n^6) \cdot k^{\mathcal{O}(k)}$ time.
\end{theorem}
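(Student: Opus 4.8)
The plan is to lift the exact algorithm of \Cref{thm:main} to arbitrary profits by the standard guess-and-scale reduction (as in~\cite{chekuri2012improved, GAVALAS2015313, korula2010approximation}), turning the weighted instance into one with polynomially bounded integer profits on the \emph{same} graph. First I would fix an optimal walk $W$ and guess the vertex $v^\star$ of largest profit that $W$ collects; there are only $n$ candidates, so I would run the whole procedure once per candidate and output the best walk found. Put $p^\star := \pi(v^\star)$, so $p^\star \le \OPT \le n p^\star$. I would then set $\pi(v):=0$ for every vertex with $\pi(v) > p^\star$ (impossible in $W$ when the guess is correct) and for every vertex with $0 < \pi(v) < \eps p^\star/n$; the latter vertices contribute less than $n \cdot \eps p^\star/n = \eps p^\star \le \eps\OPT$ to any walk, so $W$ still collects at least $(1-\eps)\OPT$ from the survivors. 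For each surviving vertex I would set $\tilde\pi(v) := \lfloor \pi(v)\, n/(\eps p^\star)\rfloor \in \{1,\dots,\lfloor n/\eps\rfloor\}$. Crucially the graph, and hence a fixed-width tree decomposition, is untouched --- only profit labels change --- so the pruning cannot disconnect an otherwise feasible walk; this is the one place that needs a little care.

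Next I would run the dynamic program of \Cref{thm:main} on the rounded instance, whose total profit is $P := n\lfloor n/\eps\rfloor = \mathcal{O}(n^2/\eps)$, with the profit coordinate $\iPi$ ranging over $\{0,\dots,P\}$ instead of $\{0,\dots,n\}$. The only modifications are in the two steps that used unit profits: at an introduce-vertex node for $v$ the profit index decreases by $\tilde\pi(v)$ rather than by $1$, and at a join node the profit is split as $c[t_1,X,\P_1,\rho] + c[t_2,X,\P_2,\iPi - \rho - \sum_{v\in X}\tilde\pi(v)]$ over all admissible $\rho$, so the shared-bag profit $\sum_{v\in X}\tilde\pi(v)$ (playing the role of $|X|$ in \Cref{thm:main}) is counted exactly once; the forget and introduce-edge nodes stay as they are. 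The correctness argument of \Cref{thm:main} carries over verbatim, since it never invokes the unit assumption outside this bookkeeping. Each node now carries $\mathcal{O}(P)$ states up to a $k^{\mathcal{O}(k)}$ factor, and a join node spends $\mathcal{O}(P)$ work per state to enumerate $\rho$; with $\mathcal{O}(n)$ nodes one run costs $\mathcal{O}(n P^2)\cdot k^{\mathcal{O}(k)} = \mathcal{O}(n^5/\eps^2)\cdot k^{\mathcal{O}(k)}$, and over the $n$ guesses of $v^\star$ the total is $\mathcal{O}(n^6/\eps^2)\cdot k^{\mathcal{O}(k)}$, i.e.\ $\mathcal{O}(n^6)\cdot k^{\mathcal{O}(k)}$ for fixed $\eps$.

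Finally I would bound the loss. Since rounding down costs at most $\eps p^\star/n$ of scaled profit per collected vertex and a walk collects at most $n$ of them, the walk $W$ has $\tilde\pi$-profit at least $(1-\eps)\OPT \cdot n/(\eps p^\star) - n$; hence the $\tilde\pi$-optimal walk $W'$ returned by the dynamic program, rescaled by the factor $\eps p^\star/n$, has scaled profit at least $(1-2\eps)\OPT$, and because scaling $\tilde\pi$ back never exceeds $\pi$ pointwise, the true profit of $W'$ is at least $(1-2\eps)\OPT$ as well. Taking the walk of largest true profit over all $n$ guesses, and after a standard rescaling of $\eps$, yields the claimed $(1+\eps)$-approximation. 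I do not expect a genuine obstacle beyond the pruning subtlety already noted: everything hinges on \Cref{thm:main} supplying an exact subroutine for polynomially bounded integer profits, on top of which this reduction is routine.
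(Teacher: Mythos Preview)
Your proposal is correct and reaches the stated bound, but the route differs from the paper's. The paper reduces to unit profits \emph{structurally}: after guessing $\pi_{max}$, it attaches $\lfloor n^2\pi(v)/\pi_{max}\rfloor$ zero-cost pendant leaves to each vertex $v$, obtaining a unit-profit instance on $\mathcal{O}(n^3)$ vertices whose tree-width is unchanged, and then invokes \Cref{thm:main} as a black box on that blown-up graph; this yields a $(1-1/n)$-approximation, which is already a $(1+\eps)$-approximation once $n>1/\eps$ (smaller $n$ are handled by brute force). You instead keep the graph fixed and open up the dynamic program of \Cref{thm:main}, enlarging the profit coordinate to $\{0,\dots,P\}$ with $P=\mathcal{O}(n^2/\eps)$ and adjusting the introduce-vertex and join recurrences accordingly. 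Your approach is more direct and exposes the $\eps$-dependence explicitly as $\mathcal{O}(n^6/\eps^2)\cdot k^{\mathcal{O}(k)}$; the paper's is more modular since \Cref{thm:main} is reused verbatim. Both arrive at $\mathcal{O}(n^6)\cdot k^{\mathcal{O}(k)}$ for fixed $\eps$.

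One small remark: you assert that ``the forget and introduce-edge nodes stay as they are'', but the forget recurrence in \Cref{thm:main} as written contains a $\iPi+1$; if taken literally this would have to become $\iPi+\tilde\pi(w)$ in your setting. This is a cosmetic point and does not affect the validity of your argument.
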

\begin{proof}
We modify a given instance $((V, E), \pi, c, s)$ s.t.~all vertices have unit profit and use Theorem \ref{thm:main} to solve this new instance. 
%A given instance $((V, E), \pi, c, s)$ will be modified such that all vertices have unit profit, in this step small profit is lost. Then use Theorem \ref{thm:main} to solve this new instance exactly.
First guess the highest vertex profit $\pi_{max}$ of an optimal walk and remove all vertices with a higher profit.
Create a new instance $(V', E')$ by copying $(V, E)$ and adding, for each %$v \in V$, $\lfloor\frac{n^2 \pi(v)}{\pi_{max}}\rfloor$ 
$v \in V$, $\lfloor n^2 \pi(v)
/\pi_{max}\rfloor$ additional vertices connected to $v$ with an edge of cost $0$. 
%Then, we create a new instance by copying $(V, E)$ into $(V', E')$.
%After, for every vertex $v \in V$ we will add $\lfloor\frac{n^2 \pi(v)}{\pi_{max}}\rfloor$ extra vertices to $V'$ connected to $v$ with an edge of cost 0. 
%To make later notations easier, 
Let $\pi'(v) = \lfloor n^2 \pi(v)/\pi_{max}\rfloor + 1$, i.e.~when a walk includes $v$ we assume that it also includes all vertices that are connected with $0$ cost edges.
Now consider a feasible walk $W$ in $(V,E)$ of distinct vertices $v_1, \cdots, v_\ell$ which has profit $\pi(W) = \sum^\ell_{i=1}\pi(v_i)$. The same walk $W'$ in $(V', E')$ (including our newly added vertices) collects $\pi'(W') = \sum^\ell_{i=1}\pi'(v_i) > n^2 \pi(W)/\pi_{max}$.
Let $\OPT'$ and $\OPT$ be the scores of the optimal walks in the modified and original instance, respectively.
Then, $\OPT' > ({n^2}/{\pi_{max}})\OPT$, however, $\pi'(W) \leq \sum^\ell_{i = 1}{n^2 \pi(v_i)}/{\pi_{max}} + 1$.
Hence, $({n^2}/{\pi_{max}}) \pi(W) \geq \pi'(W') - n \geq \pi'(W') - n \cdot {\OPT}/{\pi_{max}}$, and thus we obtain 
\[\pi(W) \geq \frac{\pi_{max}}{n^2}\pi'(W') - \frac{1}{n} \OPT.\]
%\linebreak $\pi(F) \geq ({\pi_{max}}/{n^2})\pi'(F') - ({1}/{n}) \OPT$.

As the aforementioned transformations do not increase the \tw and the result of Theorem \ref{thm:main} can be applied to obtain solution $W'$ with running time $\mathcal{O}(n^6) \cdot k^{\mathcal{O}(k)}$, solution $W'$ has profit $\pi(W') \geq (1 - \frac{1}{n})\OPT$. Hence, when $\frac{1}{n} < \eps$, we can brute force a solution, otherwise the above reduction can be followed.
\end{proof}

%\subsection{NP-hardness}

%We now justify our approximation showing that the \OP is NP-hard even on trees, implying NP-hardness on graphs with bounded \tw.

%\todo{This sentence is a strange, but not sure how to write what we want (justifying our approach). see suggestion below}
%Note that unless P=NP, it is not possible to find an exact polynomial-time algorithm, as the \OP is NP-hard on trees thus on graphs with bounded \tw. and even on trees. 
% Unless P=NP, it is not possible to find an exact polynomial-time algorithm for the \OP on graphs with bounded \tw. This is because the problem OP is NP-hard on trees, and thus on graphs with bounded \tw. 
Note that we cannot expect to find an exact polynomial-time algorithm for the \OP on graphs with bounded tree-width (unless P=NP), since it is already NP-hard for $k=1$, i.e.~on trees.

\begin{proposition} \label{prop-op-np-trees}
    The \OP is NP-hard even on trees.
\end{proposition}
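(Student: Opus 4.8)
The plan is to reduce from a strongly NP-hard number problem, most naturally \textsc{3-Partition}, so that the construction works even with the small integers permitted in the strongly NP-hard version. Given a \textsc{3-Partition} instance with a multiset $I$ of $n=3m$ positive integers summing to $mT$, where each integer lies strictly between $T/4$ and $T/2$, I would build a \emph{spider} (star of paths) with the start vertex $s$ at the center: for each integer $i\in I$ attach a path (``leg'') of $i$ edges of unit cost, and place a single unit-profit vertex at the tip of each leg (the internal vertices of the legs carry zero profit). In addition, hang $m$ extra ``reward'' vertices directly off $s$, each by an edge of some large cost $L$, or alternatively a single path-gadget whose profitable vertices can only be reached after traversing a prescribed amount of budget. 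The budget $B$ is set to $2mT + (\text{cost of collecting the }m\text{ reward vertices})$, i.e.\ exactly enough to walk out-and-back along a total leg-length of $mT$ plus the detours needed to collect the $m$ rewards.

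The key structural point is the out-and-back nature of walks on a tree: any walk that visits a leg and returns pays twice the distance it penetrates that leg, and to collect a tip vertex it must penetrate the full leg. Hence collecting the tips of a set $J\subseteq I$ of legs costs at least $2\sum_{i\in J} i$ in budget (and exactly that if the walk does not waste steps). The target profit is $n + m$ (all $3m$ tips plus all $m$ rewards). I would argue: if a walk achieves profit $n+m$, it must collect every tip, costing $\ge 2mT$, and every reward, costing the remaining budget exactly; since the total budget is tight, the walk visits each leg edge exactly twice and never backtracks unnecessarily. Then the rewards force a partition into groups: the walk's trajectory between consecutive reward-collections visits a contiguous batch of legs whose total length is forced by the per-reward budget allotment to be exactly $T$, and by the $T/4 < i < T/2$ bound each such batch contains exactly three legs. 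Conversely, a valid \textsc{3-Partition} $I_1,\dots,I_m$ yields a walk: for each $I_j$, walk out and back along its three legs (cost $2T$), then collect reward $j$, and proceed; total cost $2mT$ plus the reward-collection cost, total profit $n+m$.

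I would realize the ``forcing'' of the groups either by staggering the reward vertices behind nested path gadgets of increasing length (reward $j$ sits at depth $jT$ plus some offset, so the cheapest way to reach all of them is to interleave them with the leg-traversals in a fixed pattern), or — cleaner, but note the paper's own dynamic-graph reduction in \Cref{prop:dyn_trees} uses exactly a spider-plus-control-vertices construction — by mimicking that layout without the time windows: put $m$ control vertices at increasing distances $jT$ along a spine so that the only budget-$B$ walk reaching all of them must do $2T$ of leg-work between consecutive control vertices. Either way, one shows profit $n+m$ is attainable iff the \textsc{3-Partition} instance is a yes-instance, and membership in NP is immediate (guess the walk / the vertex visiting order), giving NP-hardness (indeed NP-completeness, and strong NP-hardness).

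The main obstacle is making the group sizes rigidly equal to three and the group sums rigidly equal to $T$ purely through distances/budget, without time windows to lean on. Concretely, I must prevent ``cheating'' walks that, say, collect four short legs between two rewards and compensate elsewhere, or that collect a reward, backtrack, and collect a leg out of order. The clean way is a tightness argument: show $B$ equals the \emph{exact} cost of the intended walk, so in a profit-$(n+m)$ solution there is zero slack — every edge on the $s$-side spine and every leg edge is used a prescribed number of times — and then the $T/4 < i < T/2$ inequalities pin the batch size to three and each batch sum to $T$. Verifying that no alternative routing achieves the same profit within the tight budget is the delicate bookkeeping step; I would handle it by a clean accounting of edge-usage multiplicities on the tree (each edge used an even number of times except possibly on the unique $s$-to-final-vertex branch), which forces the walk to be the ``canonical'' out-and-back tour.
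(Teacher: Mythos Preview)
Your approach has a fundamental gap: none of the proposed constructions actually force the legs to be grouped into triples summing to $T$. Without time windows, a walk on a tree can visit vertices in any order subject only to the total-cost budget, so in your spider the walk can simply traverse every leg out-and-back (cost $2\sum_{i\in I} i = 2mT$) and then separately collect all $m$ rewards; the target profit $n+m$ is achieved regardless of whether a 3-partition exists. Placing the rewards along a spine at depths $jT$ does not help either: walking to depth $mT$ and back collects all of them for cost $2mT$, again with no interleaving required. The mechanism you borrow from the paper's dynamic-graph reduction (\Cref{prop:dyn_trees}) works there \emph{only} because the edge time windows force the control vertices to be visited at specific moments; stripped of time windows, the forcing evaporates.

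There is also a deeper obstruction: your instance has only $0/1$ profits with total profit $n+m=\mathcal{O}(n)$, but \Cref{thm:dp-tree} in this very paper shows that the \OP on trees with unit profits is solvable in $\mathcal{O}(n^3)$ time. Since strongly NP-hard \textsc{3-Partition} yields a tree of polynomial size, your reduction would land inside a polynomially solvable subclass and therefore cannot establish NP-hardness. The paper instead gives a one-line reduction from \textsc{Knapsack}: a star with center $s$, one leaf $v_i$ per item with $\pi(v_i)=p_i$ and edge cost $s_i/2$, and budget $B=C$. This is only weak NP-hardness, but that is unavoidable here --- the polynomial DP for unit profits rules out strong NP-hardness on trees.
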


\begin{proof}
    We prove NP-hardness by a reduction from the \textsc{Knapsack} problem. Recall the decision variant of the \textsc{Knapsack} problem: Given a set of $n$ items with sizes $s_1, s_2, \cdots, s_n$, values $p_1, p_2, \cdots, p_n$, a capacity $C$, and a target value $P$, does a subset $S \subseteq \{1, 2, \cdots n\}$ exist, such that $\sum_{i \in S} s_i \leq B$ and $\sum_{i \in S} p_i \geq P$?
      
    Given such an instance, construct an \OP instance $((V, E),$ $\pi, c, s, B)$. Let $s$ be the start vertex with $\pi(s) = 0$. For each item $i$, add a vertex $v_i$ with profit $\pi(v_i) = p_i$, add an edge $e_i = (s, v_i)$ to $E$ with cost $c(e_i) = s_i/2$ and set $B = C$. The graph of our constructed instance is a tree.
    A tour that collects maximum profit on this instance would immediately give an optimal solution to the corresponding knapsack instance.
\end{proof}

\section{Conclusion and Open Problems}

In this paper, we explore various versions of the \OP %and the Travelling Salesperson problem 
on tree-like and path-like graph structures. 
%We present a first approach to solve the Orienteering Problem using fixed-parameter tractability and give a $(1+\eps)$-approximation for Orienteering on graphs with bounded tree-width.
The best known approximation factor for Orienteering on general undirected graphs is $(2+\eps)$~\cite{chekuri2012improved}. For graphs with bounded \tw we achieve a $(1+\eps)$-approximation. Can this approximation factor be extended to a wider class of graphs, in particular planar graphs? For closely related problems like the Prize-Collecting Stroll Problem the $(1+\eps)$-approximation algorithm on planar graphs builds upon an algorithm for bounded \tw~\cite{bateni2011prize}.
%By bounding the \tw of the input graph, we can give a dynamic programming algorithm that solves the Orienteering Problem with unit profits in polynomial time and gives a $(1+\eps)$-approximation for Orienteering with arbitrary profits. An interesting open problem is whether this approximation factor can be achieved for more classes of algorithms, in particular for planar graphs. For closely related problems like the Prize-Collecting Stroll Problem such an approximation builds upon an algorithm for bounded \tw~\cite{bateni2011prize}.

For the \OP with time windows we focus on the case of walks on directed cycles and one time window per profit. This is motivated by the fact that while the problem is easy to solve on directed paths, it is already NP-hard for undirected paths. For directed cycles we successfully give efficient algorithms for several fundamental cases, including the case that all profits need to be collected, but leave the complexity of this problem open in general. We do give a $(1+\eps)$-approximation, and it would be interesting to extend this result to a wider class of graphs,  even at the cost of a larger approximation factor, since for general graphs not even an $\mathcal{O}(\log n)$ approximation is known~\cite{Korula16survey}.

%solvable in $\mathcal{O}(n^4)$ time (for a cycle with $n$ vertices), and for the Orienteering Problem we can compute a 2-approximation in $\mathcal{O}(n^2 \log n)$ time. While we further show that this problem can be solved in polynomial time with further restrictions on the lengths of the time windows, it remains open for future work, whether Orienteering with time windows is NP-hard on directed cycles. Extending the results to wider graphs even at the cost of a larger approximation factor would be interesting, since for general graphs not even an $\mathcal{O}(\log n)$ approximation is known~\cite{Korula16survey}.

%While OP and thus COP is solvable in $\mathcal{O}(m \log m)$ (for $m$ is the total number of time windows) for directed paths, it remains open for future work, whether Orienteering with time windows is NP-hard on directed cycles. 

Finally,
%Moreover, 
in applications of the \OP, frequently the Edge Orienteering Problem, where profits are given for travelling over edges instead of vertices, is considered~\cite{GAVALAS2015313, verbeeck2014extension, lu2015arc}, also with time windows~\cite{lu2017scenic}.
All our results could be extended to Edge Orienteering and its versions with time windows and dynamic graphs.

%All given results on the orienteering problem are directly extendable to the Edge Orienteering Problem, where profits are given for travelling over edges instead of vertices. 
%NP-hardness even for trees can be shown by a similar construction for the Edge Orienteering Problem, and an analogous dynamic program can be generated for graphs with bounded \tw and unit edge profit (or unit edge weight). 

%An interesting problem to consider in future work would be a time-relevant Orienteering problem, where the edges change instead of the profits (by giving time windows for every edge) and thus lead to a dynamic graph. 

%Currently, we are working on the \OP with time windows. This version has the additional constraint that the profit of vertices can only be collected at specific points or intervals in time.
%The here presented dynamic program is not extendable to this problem, as it assumes that the time we enter and leave a subgraph is bounded by the \tw. However, this property is no longer valid for solutions of the \OP with time windows. In particular, we can show that the \OP with time-windows is even NP-hard on trees if all profits are unit. Hence, we are looking at even more restricted graphs like graphs with bounded path-width. 

\bibliography{refs}

\end{document}